\newtheorem{theorem}{Theorem}
\newtheorem{lemma}{Lemma}
\newtheorem{remark}{Remark}
\newtheorem{definition}{Definition}
\newtheorem{assumption}{Assumption}
\def\BibTeX{{\rm B\kern-.05em{\sc i\kern-.025em b}\kern-.08em
    T\kern-.1667em\lower.7ex\hbox{E}\kern-.125emX}}
\definecolor{lightblue}{RGB}{173,216,230} 
\begin{document}
\title{Adaptive Coded Federated Learning: Privacy Preservation and Straggler Mitigation}
\author{Chengxi Li, Ming Xiao,~\IEEEmembership{Senior Member,~IEEE,} and Mikael Skoglund,~\IEEEmembership{Fellow, IEEE}
\thanks{C. Li, M. Xiao and M. Skoglund are with the Division of Information Science and Engineering, School of Electrical Engineering and Computer Science, KTH Royal Institute of Technology, 10044 Stockholm, Sweden. (e-mail: chengxli@kth.se; mingx@kth.se; skoglund@kth.se). Corresponding author: Chengxi Li.}
}

\markboth{Journal of \LaTeX\ Class Files}%
{}

\maketitle

\begin{abstract}
In this article, we address the problem of federated learning in the presence of stragglers. For this problem, a coded federated learning framework has been proposed, where the central server aggregates gradients received from the non-stragglers and gradient computed from a privacy-preservation global coded dataset to mitigate the negative impact of the stragglers. However, when aggregating these gradients, fixed weights are consistently applied across iterations, neglecting the generation of the global coded dataset and the dynamic nature of the trained model over iterations. This oversight may result in diminished learning performance. To overcome this drawback, we propose a new method named adaptive coded federated learning (ACFL). In ACFL, before the training, each device uploads a local coded dataset with additive noise to the central server to generate a global coded dataset under privacy-preservation requirements. During each iteration of the training, the central server aggregates the gradients received from the non-stragglers and the gradient computed from the global coded dataset, where an adaptive policy for varying the aggregation weights is designed. Under this policy, we optimize the performance in terms of privacy and learning, where the learning performance is analyzed through convergence analysis and the privacy performance in sharing local coded datasets with the server is characterized via mutual information differential privacy. Finally, we perform simulations to demonstrate the superiority of ACFL compared with the baseline methods. 
\end{abstract}

\begin{IEEEkeywords}
Adaptive processing, coded computing, federated learning, mutual information differential privacy, stragglers.
\end{IEEEkeywords}

\section{Introduction}
\label{introduction} 
\IEEEPARstart{I}{n} intelligent systems, edge devices are generating an increasing amount of data, which can be utilized to train machine learning models for various tasks across a wide range of applications, such as smart cities \cite{band2022smart}, autonomous driving \cite{balkus2022survey}, and intelligent healthcare \cite{abdulkareem2021realizing}. Traditionally, these data are uploaded to a central server for centralized machine learning. However, uploading datasets directly to the central server without privacy-preservation measures raises severe privacy concerns for the edge devices \cite{zhang2023multi,du2020machine}. As an alternative to centralized machine learning, federated learning (FL) has emerged as a highly effective tool \cite{du2024distributed,khan2021federated, ye2023heterogeneous, shao2023privacy}. Under the FL framework, during each iteration of the training process, local devices compute gradients with their local datasets and send these gradients to a central server. The central server then aggregates the received gradients to update the global model \cite{mcmahan2017communication}. In this manner, the edge devices only share locally computed gradients with the central server, significantly enhancing the privacy of raw datasets. 

In FL, the training process may be hindered by devices known as stragglers, which are significantly slower than others or completely unresponsive due to various incidents \cite{karakus2017straggler, egger2023sparse,chengxi2024preprint}. Additionally, in FL, the local datasets on different devices can be highly heterogeneous, meaning they are not independent and identically distributed (non-i.i.d.) across devices \cite{vahidian2023rethinking, abdelmoniem2023comprehensive,li2021federated,li2021decentralized}. In this case, the presence of stragglers can bias the trained global model due to the absence of the contributions of some devices in each iteration. In other words, stragglers can negatively impact the learning performance of FL, potentially leading to poor convergence. In traditional distributed learning scenarios, the challenges posed by stragglers can be addressed through well-known gradient coding (GC) strategies \cite{tandon2017gradient,Chengxi20231-bitGC, wang2019erasurehead, ozfatura2019gradient, buyukates2022gradient, glasgow2021approximate,bitar2020stochastic}. The core concept of GC is replicating the training data samples and distributing them redundantly across the devices prior to training. During each training iteration, non-straggler devices encode the locally computed gradients and transmit the encoded vectors to the central server.  Upon receiving these vectors, the central server can reconstruct the exact global gradient by leveraging the redundancy in the training data distribution. In this way, the impact of stragglers is mitigated, as the missing information from stragglers can be compensated for using the encoded vectors transmitted by non-stragglers.  However, in FL scenarios, privacy concerns make it impractical to adopt GC techniques and to allocate training data to devices redundantly in a carefully planned manner. This is because devices are reluctant to share their local datasets directly with others to create data redundancy considering that their local datasets may contain sensitive or private information \cite{li2021survey}. 
 Another line of research focused on mitigating stragglers in FL is the development of asynchronous FL algorithms, where the server updates the global model without waiting for all devices to report in every iteration. Instead, the server incorporates stale gradients received from slower devices, thereby decoupling the update process from synchronous coordination. For instance, in \cite{nguyen2022federated}, a buffered asynchronous aggregation method is proposed, where the server maintains a buffer and performs a model update only when a predefined number of messages from local devices have been collected. In \cite{hu2023scheduling}, a periodic aggregation protocol is introduced that jointly considers training data representation and channel qualities. 
However, the performance gain of asynchronous methods is inherently limited, since the information from stragglers is less likely to be frequently utilized by the server, and the missing or stale information from stragglers negatively impacts overall learning performance.

For FL problems, to evade the negative impact of stragglers while avoiding the drawbacks of the aforementioned schemes, a new paradigm named coded federated learning (CFL) has been investigated \cite{dhakal2019coded, prakash2020coded, anand2021differentially, sun2023stochastic}. In CFL, each device uploads a coded version of its dataset to the central server to form a global coded dataset before training begins. During the training process, the central server aggregates gradients received from the non-stragglers and gradient computed from the global coded dataset to mitigate the negative impact of the stragglers.  Compared with sharing raw datasets as in GC, by carefully designing the way of encoding local datasets in CFL, privacy concerns can be largely reduced, and one can even achieve perfect privacy. In addition, compared with asynchronous FL methods, the learning performance can be enhanced in CFL based on the fact that the missing information from the stragglers during the training iterations can be compensated by the information gained from the global coded dataset at the server.  The key issue in CFL is how to generate the global coded dataset before training, as well as how to aggregate the gradients at the central server during the training process, in order to achieve better performance in terms of learning and privacy.  

There has been some recent progress under the framework of CFL. To be more specific, CFL was firstly investigated in \cite{dhakal2019coded} for linear regression problems under the FL setting, where local coded datasets are constructed at the devices by applying random linear projections. In \cite{prakash2020coded}, the CFL scheme was combined with distributed kernel embedding to make CFL applicable for nonlinear FL problems. Later, to further enhance the privacy performance of CFL in sharing local coded datasets with the server, a differentially private CFL method was proposed by adding noise to the local coded datasets, where the differential privacy guarantee was proved \cite{anand2021differentially}. Very recently, a stochastic coded federated learning (SCFL) method has been proposed \cite{sun2023stochastic}. In SCFL, before the training starts, the devices generate and upload privacy-preservation coded datasets to the central server by adding noise to the projected local datasets, which are used by the central server to generate the global coded dataset. In each iteration of the training process, the central server aggregates gradients received from the non-stragglers and gradient computed from the global coded dataset with fixed aggregation weights. Compared with previous methods, the advantages of SCFL lie in ensuring that the global gradient applied by the central server to update the global model is an unbiased estimation of the true gradient, and providing the trade-off between the privacy performance and the learning performance analytically.  However, in SCFL, when the central server aggregates the gradients, it consistently adopts pre-specified and fixed weights across iterations, which neglects the generation process of the global coded dataset and the dynamic nature of the trained model over iterations. This oversight results in an unnecessary performance loss, preventing SCFL from achieving the optimal performance in terms of privacy and learning.
 
\begin{figure}[h]
    \centering
    \includegraphics[width=0.8\linewidth]{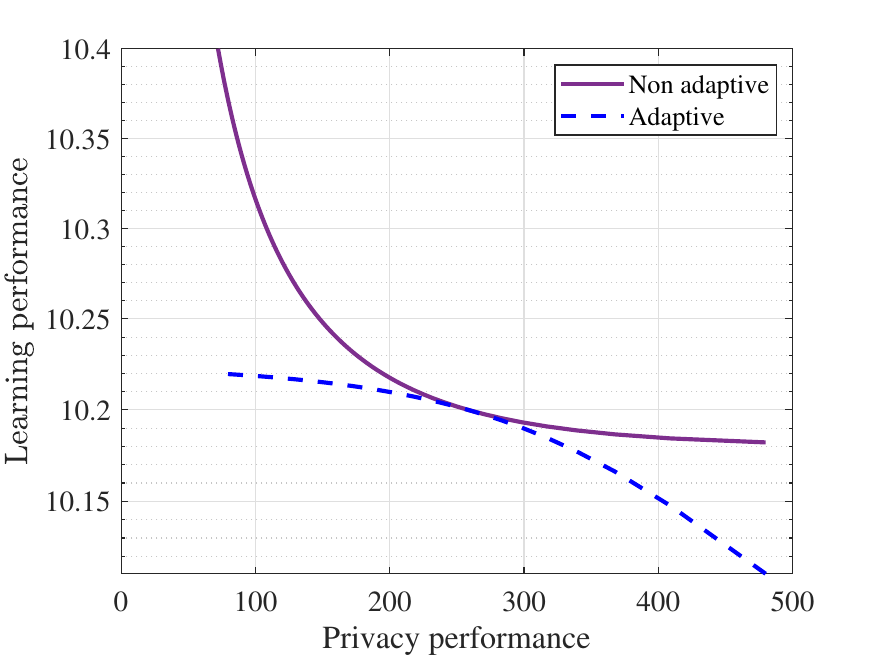}
    \caption{The trade-off between the learning and privacy performance under non-adaptive aggregation weights and adaptive aggregation weights, where the non-adaptive aggregation weight of the gradient computed by the server based on the global coded dataset is fixed at 0.1.}
    \label{fig: illus}
\end{figure}

For FL under requirements of privacy preservation and straggler mitigation, to overcome the drawbacks of existing CFL methods as mentioned above, we propose a new method named adaptive coded federated learning (ACFL). In ACFL, before training, each device uploads a local coded dataset, which is a transformed version of its local dataset with additive random noise. Upon receiving local coded datasets from the devices, the central server generates the global coded dataset. Then, during each training iteration, each non-straggler device computes local gradient based on the current global model and its local dataset, subsequently transmitting this gradient to the central server. After receiving local gradients from the non-straggler devices, the central server estimates the global gradient by aggregating the received gradients and the gradient computed from the global coded dataset, applying aggregation weights determined by an adaptive policy. Under this policy, the performance in terms of privacy and learning is optimized, where the learning performance is described by convergence analysis and the privacy performance in sharing local coded datasets with the server is measured by mutual information differential privacy (MI-DP).  As an illustration, Fig.~\ref{fig: illus} presents the relationship between learning performance and privacy performance based on our theoretical analysis, comparing adaptive aggregation weights with non-adaptive (i.e., fixed) weights. The $x$-axis represents privacy performance, where a smaller value indicates higher privacy level. The $y$-axis represents learning performance, where a smaller value indicates faster convergence (See Fig.~\ref{fig: the_per} in Section~\ref{parameter setting of ACFL} for more descriptions.).  It can be observed that by using adaptive aggregation weights determined by the adaptive policy, better learning performance is achieved for the same level of privacy performance, compared to the case of using fixed weights, as done in existing methods such as SCFL.   Finally, in order to verify the superiority of the proposed method, simulations results are provided, where we compare the performance of ACFL and the baseline methods. 

It is worth noting that although both ACFL and SCFL adopt the CFL framework, there are significant differences between them, which are outlined below:
\begin{enumerate}
    \item ACFL and SCFL employ different methods for encoding local datasets at the devices and generating the global coded dataset at the server. Unlike SCFL, which uses a separate projection matrix to transform local datasets for encoding, ACFL projects each local dataset using its own transpose. The advantage of this approach is that the transformed local datasets retain information equivalent to the raw datasets for gradient computation, leading to better learning performance at the same privacy level. This rationale will be explained in more detail in Section~\ref{the proposed methods} and further verified in Section~\ref{simulations}.
    \item Unlike SCFL, which consistently uses pre-specified and fixed weights across iterations, ACFL fully considers the generation process of the global coded dataset and the dynamic nature of the trained model over iterations and uses adaptive aggregation weights to aggregate gradients at the server for model updates. The use of adaptive aggregation weights ensures optimal performance in terms of both privacy and learning for the proposed method.
    \item Due to the very different implementations, the theoretical analysis presented in this paper is novel and distinct from that in SCFL, even though both methods use the same measure, MI-DP, to characterize the privacy level of the FL system in sharing local coded datasets with the server.
\end{enumerate}

Our contributions are summarized as follows:
\begin{enumerate}
    \item We propose a new ACFL method designed for FL problems under requirements of straggler mitigation and privacy preservation. This method develops an adaptive policy for adjusting the aggregation weights at the central server during gradient aggregation, aiming to achieve the optimal performance in terms of privacy and learning. 
    \item We analyze the trade-off between learning and privacy performance of the proposed method in sharing local coded datasets with the server.
    \item Through simulations, we demonstrate that the proposed method surpasses the baseline methods, which attains better learning performance under certain privacy levels. 
\end{enumerate}

The rest of this paper is structured as follows. In Section \ref{problem model}, we formulate the considered problem. In Section \ref{the proposed methods}, we propose our ACFL method. In Section \ref{performance analysis}, we analyze the performance of ACFL theoretically and describes the adaptive policy adopted by the proposed method. Simulation results are provided in Section \ref{simulations} to verify the superiority of the proposed method. Finally, we conclude this paper in Section \ref{conclusions}.

\section{Problem Model}
\label{problem model}
In the FL system, there is a central server and \(N\) devices, where each device \(i\) owns a local dataset consisting of \({{\mathbf{X}}^{(i)}} \in \mathbb{R}^{m_i \times d}\) and \({{\mathbf{Y}}^{(i)}} \in \mathbb{R}^{m_i \times o}\), \(\forall i\). Here, \({{\mathbf{X}}^{(i)}}\) contains \(m_i\) data samples of dimension \(d\), and \({{\mathbf{Y}}^{(i)}}\) represents the corresponding labels of dimension \(o\). We also assume that, for each local dataset, the number of data samples \(m_i\) is greater than \(d\) and \({{\mathbf{X}}^{(i)}} \in \mathbb{R}^{m_i \times d}\) is full column rank \cite{showkatbakhsh2018privacy}. Under the coordination of the central server, the aim is to minimize the total training loss $f\left( \mathbf{W} \right)$ by finding the solution to the following linear regression problem \cite{sun2023stochastic}:
\begin{align}
\label{model1}
{{\mathbf{W}}^*} = \arg \min_{\mathbf{W} \in \mathbb{R}^{d \times o}} f\left( \mathbf{W} \right) = \sum_{i = 1}^N \frac{1}{2}\left\| {{\mathbf{X}}^{(i)}\mathbf{W} - {{\mathbf{Y}}^{(i)}}} \right\|_F^2,
\end{align}
where \(\mathbf{W} \in \mathbb{R}^{d \times o}\) is the model parameter matrix, and \(\left\| \cdot \right\|_F\) denotes the Frobenius norm. Here, we focus on linear regression problems under the FL setting for two reasons. First, linear regression is a very important machine learning model that has attracted significant attention recently due to its ability to make scientific and reliable predictions. As a statistical procedure established a long time ago, its properties are well understood, enabling the rapid training of linear regression models in different applications \cite{maulud2020review, montgomery2021introduction, wen2021great, liu2024efficient}.  For instance, in agricultural applications, linear regression models can be used to predict rice cultivation timing based on weather data. In medical applications, simulated scenarios of traditional Chinese bone-setting techniques provide valuable data for analysis. By applying linear regression, one can establish relationships between measurable features of the manipulation process and the underlying conceptual knowledge, enabling the extraction of abstract insights from observable physical actions \cite{maulud2020review}.   Second, nonlinear problems can be transformed into linear regression problems through techniques such as high-order approximations or distributed kernel embedding \cite{prakash2020coded}, demonstrating the broad applicability of linear regression models.  In addition, many classification problems can be reformulated as linear regression problems as well. For instance, binary classification tasks can be approached using linear regression by encoding class labels numerically, e.g., 0 and 1, and interpreting the regression output as a probability estimate.

For the problem introduced above under the FL setting, the training process involves multiple iterations. In each iteration, the devices compute local gradients based on their local datasets and transmit to the central server. After aggregating the received gradients, the central server updates the global model and broadcasts the model parameters to all devices. Over the iterations, due to various incidents, some devices may become stragglers, which are temporarily unresponsive, fail to implement computations, and do not send any messages to the central server, unlike the normally functioning devices \cite{tandon2017gradient}. It is assumed that the probability of each device being a straggler is $p$ in each iteration, and the straggler behavior of the devices is independent across iterations and among different devices \cite{bitar2020stochastic}.  This assumption corresponds to practical scenarios in which devices are equipped with hardware or resources of similar quality, and stragglers arise due to temporary computation or communication issues, such as transient hardware errors \cite{harlap2016addressing}. These types of errors are typically independent across devices and over time, and occur with roughly equal likelihood on each device \cite{harlap2016addressing}.  Additionally, the probability of becoming a straggler is assumed to remain constant, which is reasonable when the behavior of devices is relatively stationary during the training process\footnote{{The methodology and analysis presented in this paper can be extended to scenarios in which straggler behavior is heterogeneous or correlated across devices, as well as to cases where the straggler behavior is time-varying or exhibits temporal correlation across iterations due to resource dependency. However, such extensions are beyond the primary focus of this work, which are left for future extensions.}} \cite{tandon2017gradient}.

 The presence of stragglers may degrade learning performance or even lead to poor convergence. As an illustrative example, Fig.~\ref{fig: straggler} shows the model update process with and without stragglers. When there are no stragglers, the global model is consistently updated in the direction of steepest descent. In contrast, in the presence of stragglers, missing gradients from the stragglers introduce gradient misalignment, which destabilizes the optimization trajectory and slows convergence. To be more specific, as shown in Fig.~\ref{fig: straggler}, after the same number of iterations starting from the same initial model, the optimal model can be reached in the absence of stragglers, where the training loss is minimized. However, in the presence of stragglers, the resulting model remains far from the optimum, indicating that only a suboptimal solution is obtained. In this case, additional iterations are required to reach the optimal model, compared to the scenario without stragglers.

\begin{figure}[h]
    \centering
    \includegraphics[width=\linewidth]{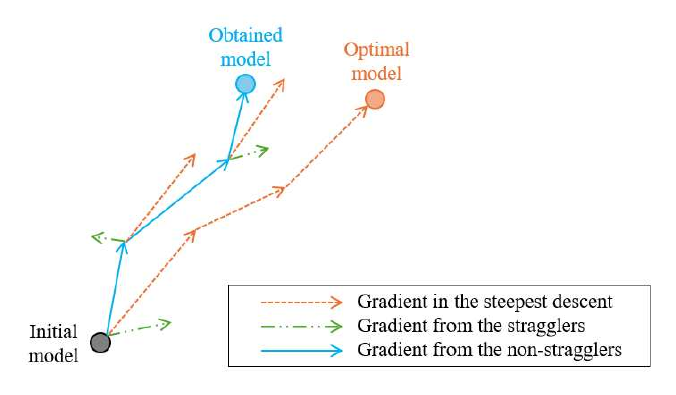}
    \caption{{An illustrative example showing the effects of the stragglers.}}
    \label{fig: straggler}
\end{figure}


To deal with the stragglers in FL, the CFL paradigm can be employed. In CFL, the local dataset \(\left( {\mathbf{X}^{(i)},\mathbf{Y}^{(i)}} \right)\) is encoded into \(\left( {\mathbf{H}_X^{(i)},\mathbf{H}_Y^{(i)}} \right)\) and transmitted to the central server by device \(i\) before training begins, \(\mathbf{H}_X^{(i)} \in \mathbb{R}^{n_X^{(i)} \times d}\), \(\mathbf{H}_Y^{(i)} \in \mathbb{R}^{n_Y^{(i)} \times o}\), $\forall i$. The central server utilizes the local coded datasets to generate a global coded dataset, denoted as \(\left( \mathbf{\tilde H}_X, \mathbf{\tilde H}_Y \right)\), instead of retaining all the received coded datasets to save its storage \cite{sun2023stochastic}. Subsequently, the training process begins, involving multiple iterations. In each iteration, each non-straggler device computes the gradient based on its local dataset and transmits it to the central server. Upon receiving these messages from the non-stragglers, the central server updates the current model parameter matrix using both the received gradients and the gradient computed from the global coded dataset \cite{sun2023stochastic}. At the end of each iteration, the central server broadcasts the updated model parameters to all devices. In CFL, the core issue is to design the method of generating global coded dataset before training and aggregating gradients at the central server during the training process, so that we can achieve a better performance in terms of privacy and learning. A recently proposed method within the CFL framework is known as SCFL \cite{sun2023stochastic}. However, in SCFL, consistently using fixed and pre-specified weights for aggregating gradients without considering the generation process of the global coded dataset and the dynamic nature of the trained model over iterations results in an unnecessary performance loss. Based on this, we aim to overcome the limitations of existing work and achieve better performance in terms of both privacy and learning, while fully exploiting the advantages of the CFL paradigm.

\section{The Proposed Method}
\label{the proposed methods}

In this section, we propose the ACFL method to address the problem outlined in Section \ref{problem model}. ACFL is implemented in two distinct stages. Initially, in the first stage prior to the commencement of training, each device transmits a coded dataset to the central server, which is derived by transforming its local dataset and adding noise. The central server then uses these local coded datasets to generate a global coded dataset. Subsequently, in the second stage, the training process involves multiple iterations. In each iteration, the central server aggregates gradient computed from the global coded dataset and the received gradients from non-straggler devices, where an adaptive policy to determine the aggregation weights is applied.
\subsection{The first stage: before training}
\label{subsection before training}
Before the training starts, each device uploads a coded dataset to the central server, which is a common practice under the framework of CFL. In the literature, methods commonly used to generate local coded datasets include matrix projection and noise injection \cite{dhakal2019coded,sun2023stochastic,showkatbakhsh2018privacy}. Matrix projection reduces the volume of data transmission by lowering the dimension of the projected data compared to the raw datasets. Meanwhile, noise injection enables privacy enhancement at adjustable levels. Inspired by this, in ACFL, the devices encode the local datasets by adding noise to a transformed version of the local datasets. To be more specific, device $i$ uploads coded dataset  \(\left( {\mathbf{H}_X^{(i)},\mathbf{H}_Y^{(i)}} \right)\) given as 
\begin{align}
    \label{local coded dataset}
    \mathbf{H}_X^{(i)} = {{\mathbf{X}}^{\left( i \right)T}}{{\mathbf{X}}^{\left( i \right)}} + {\mathbf{N}}_1^{\left( i \right)},
\end{align}
and
\begin{align}
    \label{coded label}
    \mathbf{H}_Y^{(i)} = {{\mathbf{X}}^{\left( i \right)T}}{{\mathbf{Y}}^{\left( i \right)}} + {\mathbf{N}}_2^{\left( i \right)},
\end{align}
to the central server, where ${\mathbf{N}}_1^{\left( i \right)} \in {\mathbb{R}^{d \times d}}$ and ${\mathbf{N}}_2^{\left( i \right)} \in {\mathbb{R}^{d \times o}}$ denote the additive noise. In ${\mathbf{N}}_1^{\left( i \right)}$, all the elements are independent and identically distributed (i.i.d.) random variables drawn from Gaussian distribution \(\mathcal{N}\left( {0,\sigma _1^2} \right)\). Similarly, all the elements in ${\mathbf{N}}_2^{\left( i \right)}$ are i.i.d. random variables following distribution \(\mathcal{N}\left( {0,\sigma _2^2} \right)\). Differing from existing methods that adopt random matrix projection \cite{dhakal2019coded,sun2023stochastic,showkatbakhsh2018privacy}, ACFL projects each local dataset using its own transpose. This approach still reduces the volume of transmitted data compared to sending the raw datasets, since $d < m_i$ always holds. Furthermore, as will be detailed in Section~\ref{subsection training iterations}, $\mathbf{X}^{(i)T}\mathbf{X}^{(i)}$ and $\mathbf{X}^{(i)T}\mathbf{Y}^{(i)}$ contain information equivalent to that in the raw datasets for computing gradients in linear regression problems. This implies the efficiency of projecting the local datasets as described in (\ref{local coded dataset}) and (\ref{coded label}).  Furthermore, it is worth emphasizing that the computation of $\mathbf{X}^{(i)T}\mathbf{X}^{(i)}$ and $\mathbf{X}^{(i)T}\mathbf{Y}^{(i)}$ during the encoding operation on the devices incurs almost no additional computational load. This is because, in the absence of the encoding operation, $\mathbf{X}^{(i)T}\mathbf{X}^{(i)}$ and $\mathbf{X}^{(i)T}\mathbf{Y}^{(i)}$ would still need to be computed later by each device to obtain the local gradients during the second stage of training iterations, as described in Section~\ref{subsection training iterations}.

Upon receiving local coded datasets, the central server generates a global coded dataset as
\begin{align}
    \label{global coded dataset}
    {{{\mathbf{\tilde H}}}_X} = \sum\limits_{i = 1}^N {{\mathbf{H}}_X^{(i)}},
    {{{\mathbf{\tilde H}}}_Y} = \sum\limits_{i = 1}^N {{\mathbf{H}}_Y^{(i)}}. 
\end{align}
In this way, the storage at the central server can be saved significantly compared with storing all received coded datasets directly.

\subsection{The second stage: training iterations}
\label{subsection training iterations}
In the $t$-th iteration, let us denote the current model parameter matrix as ${{\mathbf{W}}_t}$. The following random variables are adopted to represent the straggler behaviour of the devices:
\begin{align}
\label{Iit}
I_t^{\left( i \right)} = 
\begin{cases} 
1, & \text{if device } i\text{ is not a straggler}, \\
0, & \text{if device } i\text{ is a straggler}.
\end{cases}
\end{align}
According to (\ref{Iit}), \(\left\{ {I_t^{\left( i \right)},\forall t,\forall i} \right\}\) are i.i.d. Bernoulli random variables, whose probability mass function can be expressed as 
\begin{align}
\label{Iit_bernoulli}
\Pr \left( {I_t^{\left( i \right)} = 1} \right) = 1 - p, \Pr \left( {I_t^{\left( i \right)} = 0} \right) = p.
\end{align}
If device $i$ is not a straggler in the current iteration, it computes the local gradient as follows:
\begin{align}
    \label{local model update}
    {\mathbf{G}}_t^{\left( i \right)} = {{\mathbf{X}}^{\left( i \right)T}}\left[ {{{\mathbf{X}}^{\left( i \right)}}{{\mathbf{W}}_t} - {{\mathbf{Y}}^{\left( i \right)}}} \right],
\end{align}
and then transmits ${\mathbf{G}}_t^{\left( i \right)}$ to the central server. From (\ref{local model update}), we can clearly see that $\mathbf{X}^{(i)T}\mathbf{X}^{(i)}$ and $\mathbf{X}^{(i)T}\mathbf{Y}^{(i)}$ contain information equivalent to that in the raw datasets for computing gradients in the considered problem, which implies that it is reasonable to project the local datasets as described in (\ref{local coded dataset}) and (\ref{coded label}) to generate the coded datasets. At the same time, the central server computes a gradient based on the global coded dataset to obtain:
\begin{align}
    \label{server update}
    {\mathbf{G}}_t^S = {{{\mathbf{\tilde H}}}_X}{{\mathbf{W}}_t} - {{{\mathbf{\tilde H}}}_Y}.
\end{align}
Subsequently, with the received gradients as shown in (\ref{local model update}) and the gradient computed as (\ref{server update}), the server derives the global gradient as 
\begin{align}
    \label{global model update}
    {\mathbf{G}}_t^{All} = {\alpha _t}{\mathbf{G}}_t^S + \frac{{1 - {\alpha _t}}}{{1 - p}}\sum\limits_{i = 1}^N {{\mathbf{G}}_t^{\left( i \right)}I_t^{\left( i \right)}},
\end{align}
where $\alpha _t \in [0,1]$ represents the adaptive aggregation weight. For the proposed method, we will introduce an adaptive policy for adjusting the value of $\alpha _t$ over iterations to attain the optimal performance in terms of privacy and learning after conducting performance analysis in the next section. With the global gradient provided in (\ref{global model update}), the central server updates the model parameter matrix as
\begin{align}
    \label{update model}
    {{\mathbf{W}}_{t + 1}} = {{\mathbf{W}}_t} - \eta_t {\mathbf{G}}_t^{All},
\end{align}
and broadcasts ${{\mathbf{W}}_{t + 1}}$ to all devices, which ends the $t$-th iteration. In (\ref{update model}), $\eta_t$ denotes the learning rate. The implementation of the proposed method is shown as Fig.~\ref{fig: framework}.  
\begin{figure}[h]
    \centering
    \includegraphics[width=\linewidth]{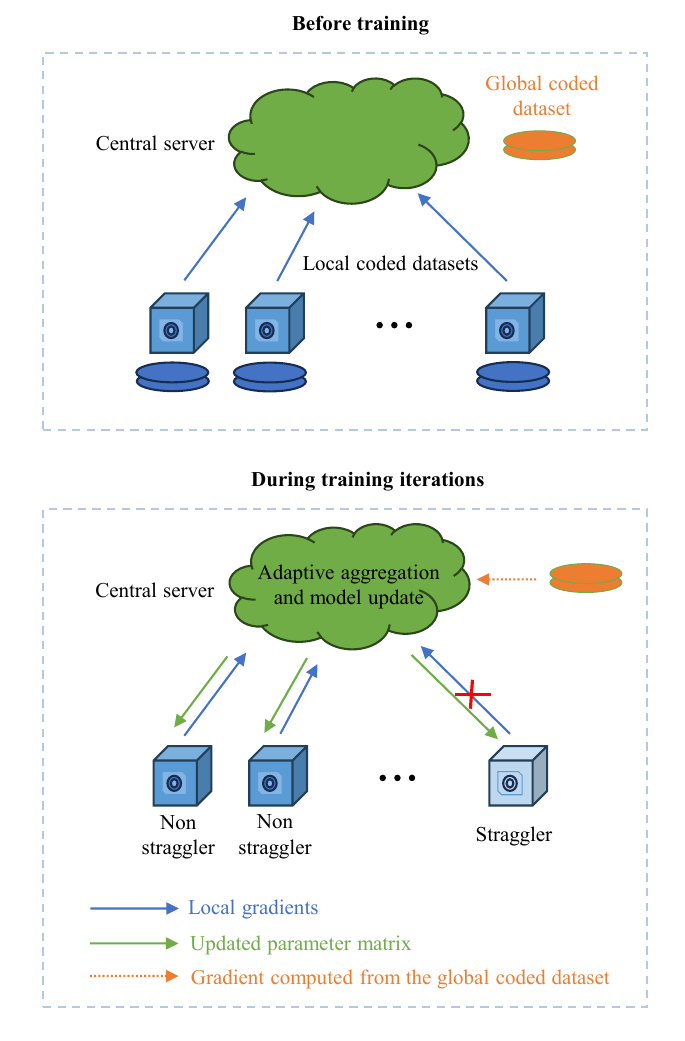}
    \caption{The implementation of ACFL.}
    \label{fig: framework}
\end{figure}
\begin{remark}
    \label{remark 00}
    For the proposed method, the overall communication overhead during $T$ iterations, measured by the number of uploaded bits from the devices to the central server, can be expressed as
    \begin{align}
        \label{ACFL communication overhead}
        {\psi ^{(T)}} = \psi _1^{(T)} + \psi _2^{(T)},
    \end{align}
    where $\psi _1^{(T)}$ is the communication overhead in the first stage induced by transmission of local coded datasets and $\psi _2^{(T)}$ is the communication overhead in the second stage resulted from uploading gradients from the devices to the central server. Based on that, we can derive $\psi _1^{(T)}$ and $\psi _2^{(T)}$ as follows:
    \begin{align}
        \label{psi}
        \psi _1^{(T)} &= \phi\left( {{d^2} + od} \right)N,\\
        \psi _2^{(T)} &= \phi od\sum\limits_{i = 1}^N {\sum\limits_{t = 0}^T {I_t^{\left( i \right)}} },
    \end{align}
    where $\phi$ denotes the number of bits required for the transmission of a real number. In many practical applications, the FL process involves a large number of iterations with \(T \gg \frac{{d + o}}{{o\left( {1 - p} \right)}}\), which can be verified under our simulation settings in  Section~\ref{simulations}, indicating $\psi _2^{(T)} \approx \phi odN(1-p)\left( {T + 1} \right) \gg \psi _1^{(T)}$. Hence, it holds that ${\psi ^{(T)}} \approx \psi _2^{(T)}$. Based on this, the additional communication overhead introduced by the first stage of the proposed method is negligible compared to the overall communication cost of the training process.

        Note that for conventional FL involving only training iterations \cite{mcmahan2017communication}, the communication overhead arises from uploading gradients from the devices to the central server in each iteration. In this context, the per-iteration communication overhead of conventional FL and ACFL is equivalent. Given that the additional communication overhead introduced by the first stage of ACFL is negligible compared to the communication cost incurred during training iterations, both conventional FL and ACFL incur approximately the same overall communication burden after the same number of iterations.     
\end{remark}

\section{Performance Analysis and The Adaptive Policy}
\label{performance analysis}
In this section, we first analyze the performance of the proposed method with arbitrary aggregation weights from two aspects: privacy performance and learning performance. Based on that, to attain the optimal performance in terms of privacy and learning, the adaptive policy of determining the value of $\alpha_t$ would be specified accordingly. 
 Note that, in this paper, we focus on analyzing the privacy performance during the sharing of local coded datasets with the server, rather than the transmission of local gradients during training iterations, similar as \cite{sun2023stochastic,prakash2020coded}. This focus is motivated by the observation that, compared with the typical FL framework, the additional privacy concerns introduced by the proposed method arise only in the first stage, where local coded datasets are shared with the server. In contrast, the privacy concerns associated with sharing local gradients exist universally across all FL scenarios. Based on that, it is sufficient to analyze the additional privacy concerns in the ACFL framework to evaluate the advantages and limitations of the proposed method.
Additionally, numerous works in the FL literature address privacy concerns related to sharing local gradients with the server \cite{el2022differential}. These techniques can be seamlessly incorporated into the proposed method to enhance privacy during the second stage. However, this is beyond the primary scope of this work. In light of this, in this paper, when we refer to the privacy performance of ACFL, we specifically mean the privacy performance of ACFL in sharing local coded datasets with the server.

First, let us provide some useful assumptions, which have been widely and reasonably adopted in the literature.
\begin{assumption}
\label{assumption absolute values in x and y}
It is assumed that the absolute values of all elements in \({{\mathbf{X}}^{(i)}}\) and \({{\mathbf{Y}}^{(i)}}\) are no larger than 1, \(\forall i\) \cite{showkatbakhsh2018privacy}.
\end{assumption}
\begin{assumption}
    \label{assumption bounded gradients}
The local gradients computed as (\ref{local model update}) are all bounded as \cite{sun2023stochastic}
\begin{align}
    \label{bounded gradients}
    \left\| {{\mathbf{G}}_t^{\left( i \right)}} \right\|_F^2 \leq {\beta ^2},\forall i.
\end{align}
\end{assumption}
\begin{assumption}
    \label{assumption bounded parameter}
The parameter matrix is bounded as follows \cite{sun2023stochastic}:
    \begin{align}
        \label{parameter bound}
  \left\| {{{\mathbf{W}}_t}} \right\|_F^2 \leq {C^2}.
  \end{align}
\end{assumption}

\subsection{Privacy performance of ACFL}
\label{section Privacy performance of ACFL}

 To evaluate the privacy performance of the proposed method in sharing local coded datasets with the server,  the metric of MI-DP is adopted, which is stronger than the conventional differential privacy metric \cite{cuff2016differential, li2021privacy}. Let us provide the definition of MI-DP as follows \cite{showkatbakhsh2018privacy}:
\begin{definition}[$\epsilon$-MI-DP]
\label{def-mi-dp}
The coded dataset $q\left( {\mathbf{Z}} \right) = {\mathbf{\tilde Z}} \in {\mathbb{R}^{{a_2} \times {b_2}}}$ obtained from the original dataset ${\mathbf{Z}} \in {\mathbb{R}^{{a_1} \times {b_1}}}$ with encoding operation $q(\cdot)$ satisfies $\epsilon$-MI-DP if
\begin{equation}
\label{midp}
\sup_{j,k,\Gamma({\mathbf{Z}})} I(Z_{j,k};q({\mathbf{Z}})\|\mathbf{Z}^{-j,k}) \leq \epsilon,
\end{equation}
where $Z_{j,k}$ denotes the $(j,k)$-th element in ${\mathbf{Z}}$, $\mathbf{Z}^{-j,k}$ is the set of all elements in ${\mathbf{Z}}$ excluding $Z_{j,k}$, $\Gamma({\mathbf{Z}})$ denotes the distribution of ${\mathbf{Z}}$, and $I(\cdot||\cdot)$ is the conditional mutual information. 

In this definition, $\epsilon$ is an upper bound of the maximum privacy leakage of each element in the original dataset ${\mathbf{Z}}$ given the coded dataset $q\left( {\mathbf{Z}} \right)$. A decreasing value of $\epsilon$ indicates enhanced privacy performance. 
\end{definition}

Based on Definition \ref{def-mi-dp}, the privacy performance of ACFL is characterized in the following Theorem. 
\begin{theorem}[Privacy performance of ACFL]
\label{Privacy performance of ACFL}
With respect to device $i$, $\forall i$, the local coded dataset satisfies $\epsilon$-MI-DP, where
\begin{align}
    \label{eps}
    \epsilon=\left( {d - \frac{1}{2}} \right)\log \frac{{1 + \sigma _1^2}}{{\sigma _1^2}} + \frac{o}{2}\log \frac{{1 + \sigma _2^2}}{{\sigma _2^2}}.
\end{align}

\end{theorem}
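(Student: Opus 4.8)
The plan is to apply Definition~\ref{def-mi-dp} directly, with $\mathbf{Z}$ taken to be device $i$'s raw local dataset and $q(\mathbf{Z})=(\mathbf{H}_X^{(i)},\mathbf{H}_Y^{(i)})$. Fix the perturbed entry $Z_{j,k}$ and condition on $\mathbf{Z}^{-j,k}$, i.e.\ treat everything except $Z_{j,k}$ as known. The first task is combinatorial: determine which entries of the coded dataset actually depend on $Z_{j,k}$. Since $(\mathbf{X}^{(i)T}\mathbf{X}^{(i)})_{p,q}=\sum_r X^{(i)}_{r,p}X^{(i)}_{r,q}$ and $(\mathbf{X}^{(i)T}\mathbf{Y}^{(i)})_{p,q}=\sum_r X^{(i)}_{r,p}Y^{(i)}_{r,q}$, perturbing one entry of $\mathbf{X}^{(i)}$ lying in column $k$ changes exactly the $k$-th row and $k$-th column of $\mathbf{H}_X^{(i)}$ ($2d-1$ entries, the diagonal one counted once) together with the $k$-th row of $\mathbf{H}_Y^{(i)}$ ($o$ entries), whereas a perturbed entry of $\mathbf{Y}^{(i)}$ propagates to only one column of $\mathbf{H}_Y^{(i)}$, a smaller set; hence the binding position is an entry of $\mathbf{X}^{(i)}$. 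Every remaining coded entry equals a constant fixed by $\mathbf{Z}^{-j,k}$ plus independent Gaussian noise and is therefore independent of $Z_{j,k}$, so by the chain rule the conditional mutual information in (\ref{midp}) reduces to $I(Z_{j,k};r_1,\dots,r_M\,\|\,\mathbf{Z}^{-j,k})$, where $r_1,\dots,r_M$ are the $M=2d-1+o$ affected coded entries.

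Next I would cast each $r_m$ as a (possibly nonlinear) Gaussian observation of $Z_{j,k}$: after subtracting the known constant, each off-diagonal affected entry of $\mathbf{H}_X^{(i)}$ reads $g\,Z_{j,k}+n$ with a known gain $g$ equal to some entry of $\mathbf{X}^{(i)}$; the single diagonal entry reads $Z_{j,k}^2+n$; each affected entry of $\mathbf{H}_Y^{(i)}$ reads $g'\,Z_{j,k}+n$ with $g'$ an entry of $\mathbf{Y}^{(i)}$; and the noises are mutually independent, independent of the data, and distributed as $\mathcal{N}(0,\sigma_1^2)$ for the $\mathbf{H}_X^{(i)}$ entries and $\mathcal{N}(0,\sigma_2^2)$ for the $\mathbf{H}_Y^{(i)}$ entries. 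Assumption~\ref{assumption absolute values in x and y} yields $|g|,|g'|\le 1$, while $|Z_{j,k}|\le 1$ yields $\mathrm{Var}(Z_{j,k})\le 1$ and $\mathrm{Var}(Z_{j,k}^2)\le 1$ for every admissible law of $Z_{j,k}$.

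The core estimate then follows from standard information inequalities. Because each $r_m$ is a deterministic function of $(Z_{j,k},\mathbf{Z}^{-j,k})$ plus an independent noise, $h(r_1,\dots,r_M\mid Z_{j,k},\mathbf{Z}^{-j,k})=\sum_m\tfrac{1}{2}\log(2\pi e\,\sigma_{(m)}^2)$ with $\sigma_{(m)}\in\{\sigma_1,\sigma_2\}$; and by subadditivity of differential entropy and the Gaussian maximum-entropy bound, $h(r_1,\dots,r_M\mid\mathbf{Z}^{-j,k})\le\sum_m\tfrac{1}{2}\log(2\pi e\,\mathrm{Var}(r_m\mid\mathbf{Z}^{-j,k}))\le\sum_m\tfrac{1}{2}\log(2\pi e(1+\sigma_{(m)}^2))$, where the last inequality uses the gain bounds $|g|,|g'|\le 1$ and $\mathrm{Var}(Z_{j,k}\mid\mathbf{Z}^{-j,k})\le 1$ (and $\mathrm{Var}(Z_{j,k}^2\mid\cdot)\le 1$ for the diagonal term). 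Subtracting, each $\mathbf{H}_X^{(i)}$-observation contributes at most $\tfrac{1}{2}\log\tfrac{1+\sigma_1^2}{\sigma_1^2}$ and each $\mathbf{H}_Y^{(i)}$-observation at most $\tfrac{1}{2}\log\tfrac{1+\sigma_2^2}{\sigma_2^2}$, so summing over the $2d-1$ entries of the former and the $o$ entries of the latter gives $I\le(d-\tfrac{1}{2})\log\tfrac{1+\sigma_1^2}{\sigma_1^2}+\tfrac{o}{2}\log\tfrac{1+\sigma_2^2}{\sigma_2^2}=\epsilon$. Only the boundedness of $\mathbf{X}^{(i)}$ and $\mathbf{Y}^{(i)}$ (not the law of $Z_{j,k}$) was used, so the bound is uniform in $p(\mathbf{Z})$; taking the supremum over $(j,k)$ and $p(\mathbf{Z})$ in (\ref{midp}), and recalling that an $\mathbf{X}^{(i)}$-entry is the worst position, completes the proof.

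I expect the main obstacle to be the bookkeeping in the first step: pinning down precisely the $M=2d-1+o$ affected coded entries and rigorously arguing that, once $\mathbf{Z}^{-j,k}$ is conditioned on, the remaining coded entries are independent of $Z_{j,k}$ so that the conditional mutual information collapses onto these $M$ observations; and then correctly folding the single quadratic observation $Z_{j,k}^2+n$ into an argument that is otherwise linear-Gaussian. Once this reduction is set up, the rest is a routine application of ``conditioning reduces entropy'' and ``the Gaussian maximizes differential entropy for a fixed variance.''
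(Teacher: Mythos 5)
Your proposal is correct and follows essentially the same route as the paper: identify the $2d-1$ affected entries of $\mathbf{H}_X^{(i)}$ and the $o$ affected entries of $\mathbf{H}_Y^{(i)}$, discard the rest as pure noise given the conditioning, and bound each affected observation's contribution by $\tfrac{1}{2}\log\tfrac{1+\sigma_1^2}{\sigma_1^2}$ or $\tfrac{1}{2}\log\tfrac{1+\sigma_2^2}{\sigma_2^2}$ via subadditivity, conditioning-reduces-entropy, and the Gaussian maximum-entropy bound under Assumption~\ref{assumption absolute values in x and y}. The only (harmless) difference is that you also verify a perturbed $\mathbf{Y}^{(i)}$-entry is dominated by the $\mathbf{X}^{(i)}$ case, whereas the paper restricts attention to entries of $\mathbf{X}^{(i)}$ from the outset.
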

\begin{proof}
        Please see Appendix~\ref{appendix privacy}.
\end{proof}
\begin{remark}
    \label{remark 1}
From Theorem \ref{Privacy performance of ACFL}, we observe that \(\epsilon\) is a decreasing function with respect to \(\sigma_1^2\) and \(\sigma_2^2\). By Definition \ref{def-mi-dp}, the privacy performance of ACFL is enhanced by increasing the strength of the additive noise, that is, by increasing the values of \(\sigma_1^2\) and \(\sigma_2^2\), aligning with our intuition. Furthermore, it holds that 
\begin{align}
    \label{asy}
    \lim_{\sigma_1^2, \sigma_2^2 \to \infty} \epsilon = 0,
\end{align}
which signifies that perfect privacy can be asymptotically achieved when the variances of the additive noise approach infinity.
\end{remark}
\begin{remark}
In the typical FL setting, gathering the training data of all devices at the server is challenging due to two main reasons: (1) sharing raw training data directly with the server raises privacy concerns, especially when the data is sensitive, and (2) the communication overhead of transmitting local datasets to the server is significant.

In contrast, the proposed ACFL framework addresses both issues by allowing devices to share local coded datasets with the server. First, privacy concerns are mitigated by transforming the original datasets with additive noise before transmission. This ensures that the privacy leakage caused by transmitting data can be well controlled, as shown in Theorem~\ref{Privacy performance of ACFL} and Remark~\ref{remark 1}. Second, the communication overhead is significantly reduced by transmitting local coded datasets instead of raw datasets, as explained in Section~\ref{subsection before training}. 
\end{remark}
\subsection{Learning performance of ACFL}
\label{section learning performance of ACFL}

Let us first provide two lemmas, which facilitate the subsequent derivations of Theorem 2 describing the learning performance of ACFL under arbitrary values of aggregation weights.  

\begin{lemma}
\label{unbiased}
The global gradient applied by the central server denoted as (\ref{global model update}) is an unbiased estimation of the true gradient given as
\begin{align}
    \label{true gradient}
    {\mathbf{G}}_t^{\text{True}} = \sum\limits_{i = 1}^N {{\mathbf{G}}_t^{\left( i \right)}},
\end{align}
where \({{\mathbf{G}}_t^{\left( i \right)}}\) is provided as (\ref{local model update}).
\end{lemma}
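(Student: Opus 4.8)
\emph{Proof proposal.} The plan is to compute the conditional expectation of $\mathbf{G}_t^{All}$ given the current model $\mathbf{W}_t$ — that is, averaging over the additive noise $\{\mathbf{N}_1^{(i)},\mathbf{N}_2^{(i)}\}$ and the straggler indicators $\{I_t^{(i)}\}$ with the local data and $\mathbf{W}_t$ held fixed — and to show it equals $\mathbf{G}_t^{True}$ in (\ref{true gradient}). By linearity applied to (\ref{global model update}), it suffices to evaluate $\mathbb{E}[\mathbf{G}_t^S\mid\mathbf{W}_t]$ and $\mathbb{E}[\sum_{i=1}^N \mathbf{G}_t^{(i)} I_t^{(i)}\mid\mathbf{W}_t]$ separately and then recombine them with the weights $\alpha_t$ and $\tfrac{1-\alpha_t}{1-p}$.

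For the first term, I would substitute the definition (\ref{global coded dataset}) of the global coded dataset together with (\ref{local coded dataset}) and (\ref{coded label}) into (\ref{server update}), obtaining $\mathbf{G}_t^S = \sum_{i=1}^N (\mathbf{X}^{(i)T}\mathbf{X}^{(i)}\mathbf{W}_t - \mathbf{X}^{(i)T}\mathbf{Y}^{(i)}) + \sum_{i=1}^N (\mathbf{N}_1^{(i)}\mathbf{W}_t - \mathbf{N}_2^{(i)})$. The first sum is exactly $\sum_{i=1}^N \mathbf{X}^{(i)T}(\mathbf{X}^{(i)}\mathbf{W}_t - \mathbf{Y}^{(i)}) = \sum_{i=1}^N \mathbf{G}_t^{(i)} = \mathbf{G}_t^{True}$ by (\ref{local model update}) and (\ref{true gradient}); the second sum is a linear function of the noise matrices, whose entries are zero-mean and independent of the data and of $\mathbf{W}_t$, so its conditional mean vanishes. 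Hence $\mathbb{E}[\mathbf{G}_t^S\mid\mathbf{W}_t] = \mathbf{G}_t^{True}$.

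For the second term, the indicators $I_t^{(i)}$ are Bernoulli with $\Pr(I_t^{(i)}=1)=1-p$ by (\ref{Iit_bernoulli}) and are independent of the local data and of $\mathbf{W}_t$ (they describe fresh randomness in iteration $t$), so $\mathbb{E}[\mathbf{G}_t^{(i)} I_t^{(i)}\mid\mathbf{W}_t] = (1-p)\,\mathbf{G}_t^{(i)}$ for each $i$. Multiplying by $\tfrac{1-\alpha_t}{1-p}$ and summing over $i$ gives $(1-\alpha_t)\sum_{i=1}^N\mathbf{G}_t^{(i)} = (1-\alpha_t)\mathbf{G}_t^{True}$. Adding the $\alpha_t\mathbf{G}_t^S$ contribution from the first step yields $\mathbb{E}[\mathbf{G}_t^{All}\mid\mathbf{W}_t] = \alpha_t\mathbf{G}_t^{True} + (1-\alpha_t)\mathbf{G}_t^{True} = \mathbf{G}_t^{True}$, which completes the argument and, notably, holds for every $\alpha_t\in[0,1]$.

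The computation itself is elementary; the only point that needs care is the probabilistic bookkeeping. One must fix, once and for all, what the expectation is taken over — here, the noise drawn before training and the per-iteration straggler pattern, with the current iterate $\mathbf{W}_t$ treated as given — and invoke the zero mean of the Gaussian noise together with its independence from the data. I expect this framing, rather than any genuine analytical difficulty, to be the main thing to get right: strictly speaking $\mathbf{W}_t$ is itself built from the fixed noise realizations through the earlier updates (\ref{update model}), so one adopts the standard convention of conditioning on the current model when asserting unbiasedness. The key structural fact being exploited is that $\mathbf{X}^{(i)T}\mathbf{X}^{(i)}$ and $\mathbf{X}^{(i)T}\mathbf{Y}^{(i)}$ carry exactly the data statistics needed to reproduce $\mathbf{G}_t^{(i)}$, so the encoding (\ref{local coded dataset})–(\ref{coded label}) introduces only an additive, zero-mean perturbation, and the rescaling factor $\tfrac{1}{1-p}$ is precisely the debiasing correction for random stragglers.
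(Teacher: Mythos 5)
Your proposal is correct and follows essentially the same route as the paper's own proof: establish $\mathbb{E}[\mathbf{G}_t^S\mid\mathbf{W}_t]=\mathbf{G}_t^{True}$ from the zero-mean noise in (\ref{local coded dataset})--(\ref{global coded dataset}), then use $\mathbb{E}[I_t^{(i)}]=1-p$ to debias the non-straggler sum and recombine with weights $\alpha_t$ and $1-\alpha_t$. Your additional remarks on the conditioning convention for $\mathbf{W}_t$ are a welcome clarification but do not change the argument.
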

\begin{proof}
According to (\ref{local coded dataset}), (\ref{coded label}), (\ref{global coded dataset}) and (\ref{server update}), we have
\begin{align}
    \label{global_update_unbiased}
    \mathbb{E}\left( {\left. {{\mathbf{G}}_t^S} \right|{{\mathbf{W}}_t}} \right) = \mathbb{E}\left( {{{{\mathbf{\tilde H}}}_X}} \right){{\mathbf{W}}_t} - \mathbb{E}\left( {{{{\mathbf{\tilde H}}}_Y}} \right) = {\mathbf{G}}_t^{\text{True}}.
\end{align}
Further, based on (\ref{Iit_bernoulli}), (\ref{global model update}) and (\ref{global_update_unbiased}), we have 
\begin{align}
    \label{unbiased1}
    \mathbb{E}\left( {\left. {{\mathbf{G}}_t^{All}} \right|{{\mathbf{W}}_t}} \right) = \mathbb{E}\left[ {\left. {{\alpha _t}{\mathbf{G}}_t^S + \frac{{1 - {\alpha _t}}}{{1 - p}}\sum\limits_{i = 1}^N {{\mathbf{G}}_t^{\left( i \right)}I_t^{\left( i \right)}} } \right|{{\mathbf{W}}_t}} \right] \nonumber\\
    = {\alpha _t}{\mathbf{G}}_t^{\text{True}} + \left( {1 - {\alpha _t}} \right)\sum\limits_{i = 1}^N {{\mathbf{G}}_t^{\left( i \right)}}  = {\mathbf{G}}_t^{\text{True}},
\end{align}
which completes the proof. 
\end{proof}

\begin{lemma}
\label{bounded variance}
The global gradient in (\ref{global model update}) is bounded as
\begin{align}
    \label{model update bound}
  &\mathbb{E}\left[ {\left\| {{\mathbf{G}}_t^{All}} \right\|_F^2} \right] \nonumber \\
   \leq& \left[ {\alpha _t^2p + \left( {1 - p} \right){{\left( {{\alpha _t} + \frac{{1 - {\alpha _t}}}{{1 - p}}} \right)}^2} + N - 1} \right]N{\beta ^2} \nonumber \\
   &+ \alpha _t^2Nd\sigma _1^2{C^2} + \alpha _t^2N\sigma _2^2od. 
\end{align}
\end{lemma}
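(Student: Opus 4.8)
# Proof Proposal for Lemma 2

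The plan is to expand $\mathbb{E}\left[ \left\| {\mathbf{G}}_t^{All} \right\|_F^2 \right]$ by substituting the definition in (\ref{global model update}), then exploit the independence structure among the straggler indicators $I_t^{(i)}$ and the additive noise matrices ${\mathbf{N}}_1^{(i)}, {\mathbf{N}}_2^{(i)}$. First I would write
\begin{align*}
{\mathbf{G}}_t^{All} = {\alpha _t}\left( {{\mathbf{G}}_t^{True} + {\mathbf{E}}_t} \right) + \frac{{1 - {\alpha _t}}}{{1 - p}}\sum\limits_{i = 1}^N {{\mathbf{G}}_t^{\left( i \right)}I_t^{\left( i \right)}},
\end{align*}
where ${\mathbf{E}}_t = {\mathbf{G}}_t^S - {\mathbf{G}}_t^{True} = \left( \sum_i {\mathbf{N}}_1^{(i)} \right){\mathbf{W}}_t - \sum_i {\mathbf{N}}_2^{(i)}$ is the zero-mean noise term coming from the coded-dataset gradient. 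The Frobenius norm squared then splits into cross terms and diagonal terms; because ${\mathbf{E}}_t$ has mean zero and is independent of the $I_t^{(i)}$, and because ${\mathbf{G}}_t^S$ conditioned on ${\mathbf{W}}_t$ has mean ${\mathbf{G}}_t^{True}$, several cross terms can be organized cleanly. A cleaner route is to first condition on the straggler pattern and on ${\mathbf{W}}_t$, handle the noise expectation, then take expectation over the $I_t^{(i)}$.

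Next I would bound the "non-straggler aggregation" part. Writing $S_t = \sum_{i=1}^N {\mathbf{G}}_t^{(i)} I_t^{(i)}$, I would expand $\|S_t\|_F^2 = \sum_i \|{\mathbf{G}}_t^{(i)}\|_F^2 I_t^{(i)} + \sum_{i \ne j} \langle {\mathbf{G}}_t^{(i)}, {\mathbf{G}}_t^{(j)} \rangle I_t^{(i)} I_t^{(j)}$, take expectation using $\mathbb{E}[I_t^{(i)}] = 1-p$ and $\mathbb{E}[I_t^{(i)}I_t^{(j)}] = (1-p)^2$ for $i \ne j$, and use Assumption \ref{assumption bounded gradients} together with Cauchy--Schwarz ($\langle {\mathbf{G}}_t^{(i)}, {\mathbf{G}}_t^{(j)} \rangle \le \beta^2$) to bound everything by a multiple of $N\beta^2$. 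The term $\mathbb{E}[\|{\mathbf{E}}_t\|_F^2]$ is where the noise variances enter: since the elements of $\sum_i {\mathbf{N}}_1^{(i)}$ are i.i.d.\ $\mathcal{N}(0, N\sigma_1^2)$ and $\|{\mathbf{W}}_t\|_F^2 \le C^2$ by Assumption \ref{assumption bounded parameter}, one gets $\mathbb{E}[\|(\sum_i {\mathbf{N}}_1^{(i)}){\mathbf{W}}_t\|_F^2] \le N\sigma_1^2 d C^2$, and similarly $\mathbb{E}[\|\sum_i {\mathbf{N}}_2^{(i)}\|_F^2] = N\sigma_2^2 od$; these produce the $\alpha_t^2 N d \sigma_1^2 C^2 + \alpha_t^2 N \sigma_2^2 od$ terms. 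Collecting the $\|{\mathbf{G}}_t^{True}\|_F^2$-type contributions and the surviving cross term between $\alpha_t {\mathbf{G}}_t^{True}$ and $\frac{1-\alpha_t}{1-p} S_t$, I would bound $\|{\mathbf{G}}_t^{True}\|_F^2 = \|\sum_i {\mathbf{G}}_t^{(i)}\|_F^2 \le N \sum_i \|{\mathbf{G}}_t^{(i)}\|_F^2 \le N^2\beta^2$ and the cross terms similarly, then algebraically combine to match the claimed coefficient $\alpha_t^2 p + (1-p)(\alpha_t + \frac{1-\alpha_t}{1-p})^2 + N - 1$.

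The main obstacle I anticipate is the careful bookkeeping in matching the exact combinatorial coefficient: the factor $\alpha_t^2 p + (1-p)\left(\alpha_t + \frac{1-\alpha_t}{1-p}\right)^2$ strongly suggests the bound is obtained \emph{before} applying Cauchy--Schwarz across devices in a way that keeps the per-device variance of $\alpha_t {\mathbf{G}}_t^{(i)} + \frac{1-\alpha_t}{1-p}{\mathbf{G}}_t^{(i)}I_t^{(i)}$ intact --- i.e.\ one should compute, for each $i$, $\mathbb{E}\big[\|\alpha_t {\mathbf{G}}_t^{(i)} + \frac{1-\alpha_t}{1-p}{\mathbf{G}}_t^{(i)}I_t^{(i)}\|_F^2\big] = \big(p\,\alpha_t^2 + (1-p)(\alpha_t + \tfrac{1-\alpha_t}{1-p})^2\big)\|{\mathbf{G}}_t^{(i)}\|_F^2$ by splitting on $I_t^{(i)} \in \{0,1\}$, and treat the $i \ne j$ cross terms (which, after taking the $I$-expectation, collapse because $\alpha_t(1-p)\cdot\frac{1-\alpha_t}{1-p} + (1-p)^2(\frac{1-\alpha_t}{1-p})^2 + \dots$ reassembles into the mean, leaving only the deterministic $\langle {\mathbf{G}}_t^{(i)}, {\mathbf{G}}_t^{(j)}\rangle$ bounded by $\beta^2$, contributing $(N^2 - N)\beta^2 = (N-1)N\beta^2$). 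Getting the split-and-recombine to land exactly on $N-1$ rather than an off-by-one or a looser constant is the delicate part; everything else is routine once the independence of ${\mathbf{N}}_1^{(i)}, {\mathbf{N}}_2^{(i)}$ from ${\mathbf{W}}_t$ and the $I_t^{(i)}$ is invoked.
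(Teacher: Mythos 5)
Your proposal is correct and follows essentially the same route as the paper: the same decomposition of ${\mathbf{G}}_t^{All}$ into the per-device terms ${\mathbf{G}}_t^{(i)}\bigl[\alpha_t + \tfrac{1-\alpha_t}{1-p}I_t^{(i)}\bigr]$ plus the zero-mean coded-dataset noise, the same per-device second-moment computation yielding $p\alpha_t^2 + (1-p)\bigl(\alpha_t + \tfrac{1-\alpha_t}{1-p}\bigr)^2$, and the same noise-variance calculations for the $\alpha_t^2 N d\sigma_1^2 C^2$ and $\alpha_t^2 N\sigma_2^2 od$ terms. The only cosmetic difference is that you bound the $i\ne j$ cross terms entrywise via Cauchy--Schwarz, whereas the paper rewrites the double sum as $\bigl\|\sum_i {\mathbf{G}}_t^{(i)}\bigr\|_F^2$ and applies the norm-of-sum inequality; both give the identical $(N-1)N\beta^2$ contribution.
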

\begin{proof}
        Please see Appendix~\ref{appendix bounded variance}. 
\end{proof}

\begin{theorem}[Convergence performance of ACFL under arbitrary values of aggregation weights]
\label{theroem convergence performance}
Under Assumptions 1-3, if $\sum\limits_{i = 1}^N {{{\mathbf{X}}^{\left( i \right)T}}{{\mathbf{X}}^{\left( i \right)}}}  \geq \lambda {\mathbf{I}}$ with $\lambda>0$, by setting ${\eta _t} = {1 \mathord{\left/
 {\vphantom {1 {\left( {\lambda t} \right)}}} \right. \kern-\nulldelimiterspace} {\left( {\lambda t} \right)}}$, it holds that 
\begin{align}
    \label{convergence}
    \mathbb{E}\left( {\left\| {{{\mathbf{W}}_T} - {{\mathbf{W}}^*}} \right\|_F^2} \right) \leq \mathop {\sup }\limits_{0 \leq t \leq T} 4\frac{u\left( {{\sigma _1^2,\sigma _2^2,{\alpha _t}}} \right)}{{{\lambda ^2}T}},
\end{align}
where
\begin{align}
    \label{u}
   & u\left( {{\sigma _1^2,\sigma _2^2,{\alpha _t}}} \right)\nonumber\\
    \triangleq& \left[ {\alpha _t^2p + \left( {1 - p} \right){{\left( {{\alpha _t} + \frac{{1 - {\alpha _t}}}{{1 - p}}} \right)}^2} + N - 1} \right]N{\beta ^2}\nonumber\\
    &+ \alpha _t^2Nd\sigma _1^2{C^2} + \alpha _t^2N\sigma _2^2od.
\end{align}
\end{theorem}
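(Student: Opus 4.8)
The plan is to run the classical stochastic-gradient recursion for strongly convex objectives, using Lemma~\ref{unbiased} and Lemma~\ref{bounded variance} as the two analytic ingredients: unbiasedness of the applied gradient, and a uniform bound on its second moment.

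First I would record two facts about the objective in (\ref{model1}). It is quadratic, with gradient $\nabla f(\mathbf{W}) = \mathbf{H}\mathbf{W} - \sum_{i=1}^N \mathbf{X}^{(i)T}\mathbf{Y}^{(i)}$, where $\mathbf{H} \triangleq \sum_{i=1}^N \mathbf{X}^{(i)T}\mathbf{X}^{(i)}$; hence $\nabla f(\mathbf{W}_t)$ coincides with $\mathbf{G}_t^{True}$ of (\ref{true gradient}), and $\nabla f(\mathbf{W}^*) = \mathbf{0}$ since $\mathbf{W}^*$ minimizes $f$. Because $\mathbf{H} \geq \lambda\mathbf{I}$ with $\lambda > 0$, this yields the strong-convexity-type inequality
\begin{align*}
\left\langle \mathbf{G}_t^{True},\, \mathbf{W}_t - \mathbf{W}^*\right\rangle = \left\langle \mathbf{H}(\mathbf{W}_t - \mathbf{W}^*),\, \mathbf{W}_t - \mathbf{W}^*\right\rangle \geq \lambda\left\|\mathbf{W}_t - \mathbf{W}^*\right\|_F^2 .
\end{align*}

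Next I would expand the update (\ref{update model}): with $e_t \triangleq \|\mathbf{W}_t - \mathbf{W}^*\|_F^2$ one gets $e_{t+1} = e_t - 2\eta_t\langle \mathbf{G}_t^{All},\, \mathbf{W}_t - \mathbf{W}^*\rangle + \eta_t^2\|\mathbf{G}_t^{All}\|_F^2$. Conditioning on $\mathbf{W}_t$, Lemma~\ref{unbiased} turns the cross term into $\langle \mathbf{G}_t^{True},\, \mathbf{W}_t - \mathbf{W}^*\rangle \geq \lambda e_t$, while Lemma~\ref{bounded variance} (whose proof in fact establishes its bound conditionally on $\mathbf{W}_t$) controls the last term by $u(\sigma_1^2,\sigma_2^2,\alpha_t) \leq U$, where $U \triangleq \sup_{0\leq t\leq T} u(\sigma_1^2,\sigma_2^2,\alpha_t)$. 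Taking total expectation gives the scalar recursion $\mathbb{E}[e_{t+1}] \leq (1 - 2\eta_t\lambda)\mathbb{E}[e_t] + \eta_t^2 U$. Substituting $\eta_t = 1/(\lambda t)$, I would close the argument by induction on $t$ for the claim $\mathbb{E}[e_t] \leq 4U/(\lambda^2 t)$: at $t = 1$ the factor $1 - 2\eta_t\lambda = -1$ is negative, so discarding the nonnegative $\mathbb{E}[e_1]$ gives $\mathbb{E}[e_2] \leq U/\lambda^2 \leq 4U/(2\lambda^2)$, and for the inductive step one has $(1-2/t)\tfrac{4U}{\lambda^2 t} + \tfrac{U}{\lambda^2 t^2} = \tfrac{4U}{\lambda^2 t} - \tfrac{7U}{\lambda^2 t^2} \leq \tfrac{4U}{\lambda^2(t+1)}$, the last inequality being equivalent to the elementary bound $4t \leq 7(t+1)$. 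Reading the claim at $t = T$ gives exactly (\ref{convergence}).

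The hard part is bookkeeping rather than any single estimate: one must invoke the conditional forms of Lemmas~\ref{unbiased} and \ref{bounded variance} along a filtration for which $\mathbf{W}_t$ is measurable, use the exact quadratic gradient identity for the cross term so that no smoothness constant enters the recursion, and seed the induction at $t = 2$ to sidestep the single iteration where $1 - 2\eta_t\lambda$ is negative (there one simply drops the $\mathbb{E}[e_t]$ term). Assumption~\ref{assumption bounded parameter} guarantees every $e_t$ is finite, so no integrability issue arises, and the constant $4$ in the target bound is precisely the value that makes the inductive step close.
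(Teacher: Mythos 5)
Your proposal is correct and follows essentially the same route as the paper: the paper's proof simply combines Lemma~\ref{unbiased}, Lemma~\ref{bounded variance}, and Lemma~1 of \cite{rakhlin2011making}, and what you have done is inline the standard proof of that cited lemma (the one-step expansion of $\|\mathbf{W}_{t+1}-\mathbf{W}^*\|_F^2$, unbiasedness for the cross term, the second-moment bound, and induction with $\eta_t = 1/(\lambda t)$ yielding the constant $4$). Your bookkeeping — the strong-convexity inequality from $\sum_i \mathbf{X}^{(i)T}\mathbf{X}^{(i)} \geq \lambda\mathbf{I}$, the conditional use of the two lemmas, and the induction seeded at $t=2$ — all checks out.
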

\begin{proof}
    When $\sum\limits_{i = 1}^N {{{\mathbf{X}}^{\left( i \right)T}}{{\mathbf{X}}^{\left( i \right)}}}  \geq \lambda {\mathbf{I}}$ with $\lambda>0$, the linear regression problem is $\lambda$-strongly convex. Based on that, Theorem \ref{theroem convergence performance} can be easily proved by combining Lemma \ref{unbiased}, Lemma \ref{bounded variance}, and Lemma 1 in \cite{rakhlin2011making}.
\end{proof}
Note that the optimal solution for the considered linear regression problem is given as
\begin{align}
\label{optimal solution}
    \mathbf{W}^* = \left( \sum_{i=1}^N \mathbf{X}^{(i)T} \mathbf{X}^{(i)} \right)^{-1} \left( \sum_{i=1}^N \mathbf{X}^{(i)T} \mathbf{Y}^{(i)} \right),
\end{align}
which relies solely on the matrices ${{\mathbf{X}}^{\left( i \right)T}}{{\mathbf{X}}^{\left( i \right)}}$ and ${{\mathbf{X}}^{\left( i \right)T}}{{\mathbf{Y}}^{\left( i \right)}}$. 

Based on (\ref{optimal solution}), the server can obtain an approximate version of the exact optimal solution using the received local coded datasets $\mathbf{H}_X^{(i)}$ and $\mathbf{H}_Y^{(i)}$ in the first stage, as defined in (\ref{local coded dataset}) and (\ref{coded label}). In other words, the central server receives a noisy version of the matrices ${{\mathbf{X}}^{\left( i \right)T}}{{\mathbf{X}}^{\left( i \right)}}$ and ${{\mathbf{X}}^{\left( i \right)T}}{{\mathbf{Y}}^{\left( i \right)}}$ in the first stage and can determine an approximate solution immediately.
However, the error in this approximate solution can become arbitrarily large as the strength of the additive noise increases, which corresponds to an enhanced privacy performance of ACFL, as shown in Remark~\ref{remark 1}. When there is a stringent requirement for privacy performance, this approximate solution becomes inapplicable. Besides, in many applications, it is essential to acquire the exact solution rather than an approximate one. For instance, in medical data analysis, where the linear regression models are used to predict diseases, an exact solution ensures accurate diagnostics. Small errors in the solution could lead to incorrect diagnoses. 

In contrast, as observed from Theorem~\ref{theroem convergence performance}, the proposed method converges to the optimal solution exactly, providing an exact and reliable solution at any privacy level. Based on that, employing the proposed method to obtain an exact optimal solution is more practical than directly deriving an approximate solution from the local coded datasets.

\begin{remark}
    \label{remark 2}
    Note that the condition \(\sum_{i = 1}^N \mathbf{X}^{(i)T}\mathbf{X}^{(i)} \geq \lambda \mathbf{I}\) with \(\lambda>0\) in Theorem~\ref{theroem convergence performance} naturally holds for the considered problem formulated in Section~\ref{problem model}. This is because \(\mathbf{X}^{(i)} \in \mathbb{R}^{m_i \times d}\) is full column rank, and \(\mathbf{X}^{(i)T}\mathbf{X}^{(i)}\) is positive definite. In this case, \(\lambda>0\) can be any value that satisfies \(0<\lambda<\sum_{i = 1}^N \text{eig}_{\min} \left[ \mathbf{X}^{(i)T}\mathbf{X}^{(i)} \right]\), where \(\text{eig}_{\min}(\cdot)\) denotes the smallest eigenvalue of a matrix.

\end{remark}

\subsection{The adaptive policy}
\label{parameter setting of ACFL}
Based on Section \ref{section Privacy performance of ACFL} and Section \ref{section learning performance of ACFL}, let us now consider how to adaptively determine the value of $\alpha_t$ for ACFL to attain a better performance in terms of privacy and learning. According to Theorem \ref{Privacy performance of ACFL} and Theorem \ref{theroem convergence performance}, the problem of determining $\alpha _t$ in ACFL can be equivalently expressed as the following optimization problem:
\begin{align}
    \label{optimization problem}
  \mathop {\min }\limits_{{\alpha _t}} u\left( {\sigma _1^2,\sigma _2^2,{\alpha _t}} \right), s.t. 0\leq{\alpha _t}\leq1,
\end{align}
which aims to expedite the convergence of the learning process under certain privacy level with fixed variances of additive noise. Based on the optimization problem defined in (\ref{optimization problem}), we detail the adaptive policy for selecting the value of \(\alpha_t\) and analyze the privacy and learning performance under this policy in Theorem~\ref{theorem parameter setting} as follows.

\begin{theorem}
    \label{theorem parameter setting}
    For the proposed ACFL method, by setting
    \begin{align}
        \label{formula parameters}
        \alpha_t= \frac{{pN{\beta ^2}\frac{1}{{1 - p}}}}{{pN{\beta ^2}\frac{1}{{1 - p}} + Nd\sigma _1^2{C^2} + N\sigma _2^2od}},
    \end{align}
the optimal performance in terms of privacy and learning can be attained. In this case, under Assumptions 1-3, if $\sum\limits_{i = 1}^N {{{\mathbf{X}}^{\left( i \right)}}^T{{\mathbf{X}}^{\left( i \right)}}}  \geq \lambda {\mathbf{I}}$ with $\lambda>0$, by setting ${\eta _t} = {1 \mathord{\left/
 {\vphantom {1 {\left( {\lambda t} \right)}}} \right. \kern-\nulldelimiterspace} {\left( {\lambda t} \right)}}$, it holds that
\begin{align}
    \label{optimal performance}
    \mathbb{E}\left( {\left\| {{{\mathbf{W}}_T} - {{\mathbf{W}}^*}} \right\|_F^2} \right) \leqslant 4\frac{{\tilde u\left( {\sigma _1^2,\sigma _2^2} \right)}}{{{\lambda ^2}T}},
\end{align}
where
\begin{align}
    \label{u11}
 & \tilde u\left( {\sigma _1^2,\sigma _2^2} \right) \nonumber \\
   =&  - \frac{{{{\left( {pN{\beta ^2}\frac{1}{{1 - p}}} \right)}^2}}}{{N{\beta ^2}p\frac{1}{{1 - p}} + Nd\sigma _1^2{C^2} + N\sigma _2^2od}} \nonumber \\
   &+ N{\beta ^2}\frac{1}{{1 - p}} + N{\beta ^2}\left( {N - 1} \right).
\end{align}
With the learning performance provided as (\ref{optimal performance}) and the privacy performance described in Theorem~\ref{Privacy performance of ACFL}, the performance in terms of privacy and learning attained by the proposed method is characterized. 
\end{theorem}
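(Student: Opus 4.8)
The plan is to verify that the stated $\alpha_t$ in \eqref{formula parameters} is the unconstrained minimizer of $u(\sigma_1^2,\sigma_2^2,\alpha_t)$ from \eqref{u}, check that it lies in $[0,1]$ so the constraint in \eqref{optimization problem} is inactive, and then substitute it back to obtain $\tilde u$ in \eqref{u11}; the bound \eqref{optimal performance} then follows immediately from Theorem~\ref{theroem convergence performance} by replacing the supremum over $t$ with the minimized value. First I would regard $u$ as a function of the scalar $\alpha_t$ with $\sigma_1^2,\sigma_2^2,p,N,\beta,C,d,o$ fixed. Expanding the bracketed term, $\alpha_t^2 p + (1-p)\bigl(\alpha_t + \tfrac{1-\alpha_t}{1-p}\bigr)^2$ is a quadratic in $\alpha_t$ with strictly positive leading coefficient, and the remaining terms $\alpha_t^2 Nd\sigma_1^2 C^2 + \alpha_t^2 N\sigma_2^2 od$ only add to that positive coefficient, so $u$ is a strictly convex parabola in $\alpha_t$ and has a unique global minimizer obtained by setting $\partial u/\partial\alpha_t = 0$.

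Next I would carry out that differentiation carefully. Writing $\alpha_t + \tfrac{1-\alpha_t}{1-p} = \tfrac{1}{1-p} - \tfrac{p}{1-p}\alpha_t$, the term $(1-p)\bigl(\tfrac{1}{1-p} - \tfrac{p}{1-p}\alpha_t\bigr)^2$ differentiates to $2(1-p)\bigl(\tfrac{1}{1-p} - \tfrac{p}{1-p}\alpha_t\bigr)\bigl(-\tfrac{p}{1-p}\bigr) = -2p\bigl(\tfrac{1}{1-p} - \tfrac{p}{1-p}\alpha_t\bigr)$, and $\alpha_t^2 p$ differentiates to $2p\alpha_t$; these combine to $2p\alpha_t - \tfrac{2p}{1-p} + \tfrac{2p^2}{1-p}\alpha_t = 2p\alpha_t\bigl(1 + \tfrac{p}{1-p}\bigr) - \tfrac{2p}{1-p} = \tfrac{2p}{1-p}\alpha_t - \tfrac{2p}{1-p} = \tfrac{2p}{1-p}(\alpha_t-1)$. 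Adding the derivative $2\alpha_t(Nd\sigma_1^2 C^2 + N\sigma_2^2 od)/N\beta^2$ of the noise terms (after factoring the common $N\beta^2$), setting the total to zero, and solving for $\alpha_t$ yields exactly \eqref{formula parameters}; I would also note that since every quantity in the numerator and denominator is nonnegative and the denominator strictly exceeds the numerator, $0 \le \alpha_t < 1$, so the box constraint is satisfied and this is the constrained optimum.

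Then I would substitute $\alpha_t^\star$ back into $u$. It is cleanest to rewrite $u(\sigma_1^2,\sigma_2^2,\alpha_t) = A\alpha_t^2 - 2B\alpha_t + D$ where, collecting terms, $A = N\beta^2\bigl(p + \tfrac{p^2}{1-p}\bigr) + Nd\sigma_1^2 C^2 + N\sigma_2^2 od = N\beta^2\tfrac{p}{1-p} + Nd\sigma_1^2 C^2 + N\sigma_2^2 od$, $B = N\beta^2\tfrac{p}{1-p}$, and $D = N\beta^2\tfrac{1}{1-p} + N\beta^2(N-1)$ (one checks the $\alpha_t$-free part of the bracket equals $\tfrac{1}{1-p}$). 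The minimizer is $\alpha_t^\star = B/A$, matching \eqref{formula parameters}, and the minimum value is $D - B^2/A$, which upon writing $B^2 = \bigl(pN\beta^2\tfrac{1}{1-p}\bigr)^2$ and $A = N\beta^2 p\tfrac{1}{1-p} + Nd\sigma_1^2 C^2 + N\sigma_2^2 od$ is precisely $\tilde u(\sigma_1^2,\sigma_2^2)$ in \eqref{u11}. Since $\alpha_t^\star$ does not depend on $t$, the supremum over $0\le t\le T$ in \eqref{convergence} collapses to this single value, giving \eqref{optimal performance}; finally the privacy claim is just a restatement of Theorem~\ref{Privacy performance of ACFL}, which is unaffected by the choice of $\alpha_t$. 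The only mildly delicate step is the bookkeeping in the derivative that produces the clean factor $\tfrac{2p}{1-p}(\alpha_t-1)$ and the matching algebra when resubstituting; there is no conceptual obstacle since strict convexity guarantees the optimizer, so the main effort is making the constant-collecting in $A$, $B$, $D$ transparent.
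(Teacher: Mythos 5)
Your proposal is correct and follows essentially the same route as the paper: the paper also reduces (\ref{optimization problem}) to minimizing the quadratic $\hat u(\alpha_t)=\alpha_t^2\left(N\beta^2 p\tfrac{1}{1-p}+Nd\sigma_1^2C^2+N\sigma_2^2od\right)-\left(2pN\beta^2\tfrac{1}{1-p}\right)\alpha_t$ (your $A\alpha_t^2-2B\alpha_t$ after dropping the constant $D$), observes that the unconstrained minimizer $B/A$ lies below $1$ so the box constraint is inactive, and substitutes back into (\ref{u}) to obtain $\tilde u = D - B^2/A$ and hence (\ref{optimal performance}) via Theorem~\ref{theroem convergence performance}. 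Your additional remarks — the explicit convexity check, the collapse of the supremum over $t$ since $\alpha_t^\star$ is $t$-independent, and the observation that the privacy guarantee is unaffected by $\alpha_t$ — are all consistent with, and slightly more explicit than, the paper's argument.
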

\begin{proof}
    Problem (\ref{optimization problem}) can be equivalently expressed as the following optimization problem:
    \begin{align}
        \label{eq_pro}
        \mathop {\min }\limits_{{\alpha _t}} \hat u\left( {{\alpha _t}} \right),s.t.0 \leqslant {\alpha _t} \leqslant 1,
    \end{align}
where 
    \begin{align}
        \label{u_alpha}
        \hat u\left( {{\alpha _t}} \right) \triangleq& \alpha _t^2\left( {N{\beta ^2}p\frac{1}{{1 - p}} + Nd\sigma _1^2{C^2} + N\sigma _2^2od} \right) \nonumber\\
        &- \left( {2pN{\beta ^2}\frac{1}{{1 - p}}} \right){\alpha _t}.
    \end{align}
We note that $\frac{{pN{\beta ^2}\frac{1}{{1 - p}}}}{{ {pN{\beta ^2}\frac{1}{{1 - p}} + Nd\sigma _1^2{C^2} + N\sigma _2^2od} }} < 1$ always holds, which indicates that 
\begin{align}
    \label{optimal_alpha_sigma}
    &\mathop {\arg \min }\limits_{{\alpha _t}} \hat u\left( {{\alpha _t}} \right) \nonumber\\
    &= \frac{{pN{\beta ^2}\frac{1}{{1 - p}}}}{{pN{\beta ^2}\frac{1}{{1 - p}} + Nd\sigma _1^2{C^2} + N\sigma _2^2od}}.
\end{align}
Substituting (\ref{optimal_alpha_sigma}) into (\ref{u}) yields (\ref{optimal performance}), which completes the proof.
\end{proof}
    From Theorem~\ref{theorem parameter setting}, we observe that the aggregation weight corresponding to the gradient computed by the server based on the global coded dataset decreases as the strength of the additive noise increases. This aligns with our intuition that when stronger noise is added to the local coded datasets, the gradient derived from the global coded dataset becomes less accurate and contains more noise, making it less reliable for updating the global model. Additionally, when the straggler probability $p$ increases, devices are more likely to become stragglers, leading to a greater loss of information. With more missing information from the stragglers, the local gradients from the non-stragglers deviate further from the true gradient, making it more important to leverage the information provided by the server to compensate for the missing contributions. As a result, the aggregation weight of the gradient computed by the server based on the global coded dataset increases.
 
\begin{remark}
    According to Theorem~\ref{Privacy performance of ACFL} and Theorem~\ref{theorem parameter setting}, the learning performance of ACFL improves as the variances of the additive noise decrease, whereas the privacy performance benefits from an increase in these variances. This highlights the intrinsic trade-off between privacy and learning performance. Practically, the variances of the additive noise can be meticulously calibrated to achieve a desired trade-off, aligning with specific practical requirements.
\end{remark}

To verify that ACFL with the adaptive policy achieves a better performance in terms of privacy and learning, Fig. \ref{fig: the_per} illustrates the relationship between the upper bound derived in (\ref{convergence}) and \(\epsilon\) specified in (\ref{eps}) with non-adaptive and fixed values of aggregation weights as well as adaptive aggregation weights determined by the designed adaptive policy, where we set $d=100, \sigma_1^2=\sigma^2_2, o=10, p=0.1, N=5, \beta=10, c=1, T=1000$ and $\lambda=1$. Here, a lower value of the upper bound implies superior learning performance, whereas a lower \(\epsilon\) value signifies enhanced privacy performance. The results clearly show that under the same privacy level, the upper bound in (\ref{convergence}) reaches its minimum with the adaptive policy. This outcome indicates that ACFL, when implemented with the adaptive policy, attains the optimal performance in terms of privacy and learning.

\begin{figure}[h]
    \centering
    \includegraphics[width=0.8\linewidth]{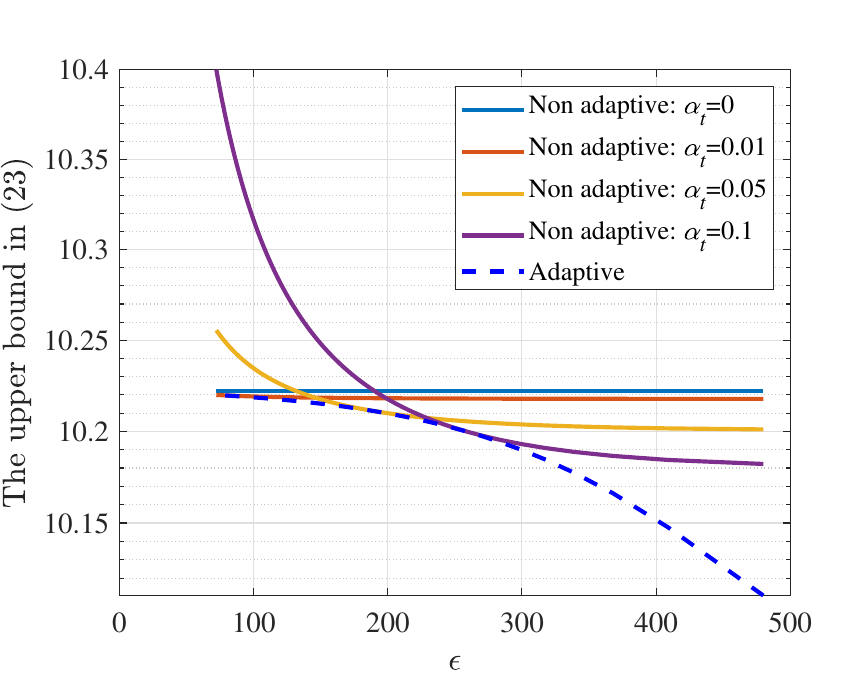}
    
    \caption{The trade-off between the learning and privacy performance is shown by depicting the upper bound in (\ref{convergence}) as a function of $\epsilon$ for ACFL under non-adaptive aggregation weights and adaptive aggregation weights.}
    \label{fig: the_per}
\end{figure}
In Theorem~\ref{theorem parameter setting}, we develop the adaptive policy for setting the aggregation weights in ACFL, which relies on the bounds specified in Assumptions~\ref{assumption bounded gradients} and~\ref{assumption bounded parameter}. Notably, obtaining the values of \(C\) and \(\beta\) beforehand can be challenging in certain scenarios. In such cases, a practical implementation of ACFL is estimating \(C\) and \(\beta\) over the iterations by the central server. Specifically, \(C\) and \(\beta\) are approximated by \(\hat{C}\) and \(\hat{\beta}\), respectively, with their expressions provided as follows:

\begin{align}
    \label{estimate beta2}
    \hat{C}^2 = \left\| {{{\mathbf{W}}_t}} \right\|_F^2, \hat{\beta}^2  = \frac{{\sum\limits_{i = 1}^N {\left\| {{\mathbf{G}}_t^{\left( i \right)}} \right\|_F^2I_t^{\left( i \right)}} }}{{\sum\limits_{i = 1}^N {I_t^{\left( i \right)}} }}.
\end{align}
Based on (\ref{estimate beta2}), the aggregation weights can be determined adaptively as
\begin{align}
        \label{near optimal parameters}
       {{\overset{\lower0.5em\hbox{$\smash{\scriptscriptstyle\smile}$}}{\alpha } }_t} = \frac{{p{{\hat \beta }^2}}}{{p{{\hat \beta }^2} + d\sigma _1^2{{\hat C}^2}\left( {1 - p} \right) + \sigma _2^2od\left( {1 - p} \right)}}.
    \end{align}
With (\ref{near optimal parameters}), the implementation of ACFL is presented in Algorithm \ref{alg:Algorithm ACFL}. 
    While the use of estimated bounds in (\ref{estimate beta2}) may lead to a slight degradation from theoretically optimal performance, our simulation results in Section~\ref{simulations} will demonstrate that the proposed method remains convergent and exhibits stable behavior across all tested scenarios. This indicates that the discrepancy between the theoretical analysis and practical implementation has a negligible impact on the stability and final performance of the algorithm. Moreover, the method consistently outperforms existing baselines, suggesting that the use of estimated bounds is sufficiently robust in practice.

From (\ref{estimate beta2}) and (\ref{near optimal parameters}), it can be seen that the computational load associated with computing the aggregation weights adaptively is $\mathcal{O}\left(odN(1 - p)\right)$. Based on this, the server-side weight computation time increases with the size of the model parameter matrix. However, this additional computational overhead does not necessarily indicate a delay in the overall algorithm, for the following reasons.
First, the server is typically equipped with sufficient computational resources to handle the additional load efficiently. As a result, the extra computation time incurred at the server is negligible when compared to the total computational cost of other components of the algorithm during training.
Second, the server can begin processing as soon as it starts receiving messages from the fastest non-straggler device. This enables the server-side computation to proceed in parallel with ongoing device-side computation and communication. In other words, the server does not need to wait until all non-straggler devices have completed their transmissions before beginning its computations.
This parallelism implies that the additional computation time introduced by computing the aggregation weights is negligible when compared to the baseline case in which fixed weights are employed by the server.
 
\begin{algorithm}
\caption{ACFL}\label{alg:Algorithm ACFL}
\textbf{Input:} Initial parameter matrix \({\mathbf{W}}_0\), learning rates $\eta_t$\\
\textbf{\textcolor{blue}{\underline{The first stage:}}} \textcolor{blue}{Device $i$ uploads coded dataset  \(\left( {\mathbf{H}_X^{(i)},\mathbf{H}_Y^{(i)}} \right)\) to the central server as (\ref{local coded dataset}) and (\ref{coded label}), $\forall i$.}\\
\textcolor{blue}{The central server generates a global coded dataset \(\left( {{{{\mathbf{\tilde H}}}_X},{{{\mathbf{\tilde H}}}_Y}} \right)\) as (\ref{global coded dataset}).}\\
\textbf{\textcolor{brown}{\underline{The second stage:}}}\\
\textcolor{brown}{\textbf{Initialize: }$t = 0$\\
 \While{\(t \leq T \) }{
 \textbf{In parallel for all devices} \(i \in \{1,2,...,N\}\)\textbf{:}\\
 \If{\(I_t^{\left( i \right)} = 1\)}{
Compute ${\mathbf{G}}_t^{\left( i \right)}$ as (\ref{local model update});\\
Transmits ${\mathbf{G}}_t^{\left( i \right)}$ to the central server;}
\textbf{The central server:}
Computes the gradient based on the global coded dataset to obtain $ {\mathbf{G}}_t^S$ as (\ref{server update});\\
Aggregates gradients to obtain ${\mathbf{G}}_t^{All}$ as (\ref{global model update}) with aggregation weights computed as (\ref{near optimal parameters});\\
Updates the model parameter matrix to obtain \({\mathbf{W}}_{t+1}\) as (\ref{update model});}
}
\end{algorithm}

\section{Simulations}
\label{simulations}
 In this section, we evaluate the performance of ACFL through simulations. Although ACFL is designed and analyzed for linear regression problems, it can also be applied to various nonlinear regression problems, considering that many nonlinear problems can be transformed into linear ones. To more extensively verify the effectiveness and superiority of ACFL, we consider both linear and nonlinear regression scenarios in our simulations.

\subsection{Linear regression scenarios} 
We consider an FL setting, where each device possesses a local dataset with $m_i = 100$ samples, $\forall i$.  The devices collaboratively train a linear regression model defined by a parameter matrix $\mathbf{W} \in \mathbb{R}^{d \times o}$, where $d = 10$ represents the input feature dimension of the data samples and $o = 10$ denotes the output dimension of the labels.  The elements in each local dataset ${{\mathbf{X}}^{(i)}}$ are i.i.d., drawn from a uniform distribution over the interval $[-1, 1]$. A true parameter matrix ${{\mathbf{W}}^{\text{True}}} \in {\mathbb{R}^{d \times o}}$ is generated with elements uniformly distributed within $[0, 1/30]$.  The labels ${{\mathbf{Y}}^{(i)}}$ for each device are generated according to ${{\mathbf{Y}}^{(i)}} = {{\mathbf{X}}^{(i)}}\left( {{{\mathbf{W}}^{\text{True}}} + i{{\mathbf{W}}^{\text{Shift}}}} \right)$ for all $i$, where ${{\mathbf{W}}^{\text{Shift}}}\in \mathbb{R}^{d \times o}$ has elements independently drawn from the uniform distribution over $[0, \sigma_s^2]$. A larger value of $\sigma_s^2$ corresponds to a greater degree of non-i.i.d.-ness in the training data across devices.  The initial parameter matrix ${\mathbf{W}}_0$ is randomly generated with elements drawn from a uniform distribution in the interval $[0, 1/30]$, independent of the true parameter matrix generation.

To evaluate the effectiveness of the adaptive policy in ACFL, we conduct a comparative analysis between ACFL and its non-adaptive counterpart (denoted as NA) with $N=100$ devices, which employs a fixed aggregation weight $\alpha_t = 0.5$, as suggested by \cite{sun2023stochastic}. In Fig. \ref{fig: compare_ACFL_NA}, we depict the training loss as a function of number of iterations for both ACFL and NA under different variances of additive noise\footnote{Note that under the same variances of additive noise, the privacy level is the same for both methods.}, where we set $\sigma_s^2=0$, $\eta_t=0.0001/t$ and we consider two distinct scenarios with varying straggler probabilities. As anticipated, an increase in noise variances, aimed at enhancing privacy, inversely affects the learning performance of both methods. This empirical observation aligns with our theoretical analysis, underscoring the intrinsic trade-off between privacy preservation and learning performance. Notably, the performance deterioration in ACFL remains relatively modest in comparison to NA upon increasing the noise, while there is a pronounced decline in the learning performance of NA. This discrepancy demonstrates the superior capability of ACFL to leverage gradients from both the central server and devices more effectively with the adaptive policy and validates the advantage of employing the adaptive policy in ACFL for mitigating the adverse effects of privacy-enhancing measures on learning performance. 

\begin{figure}[h]
    \centering
    \subfloat[]{%
        \includegraphics[width=0.8\linewidth]{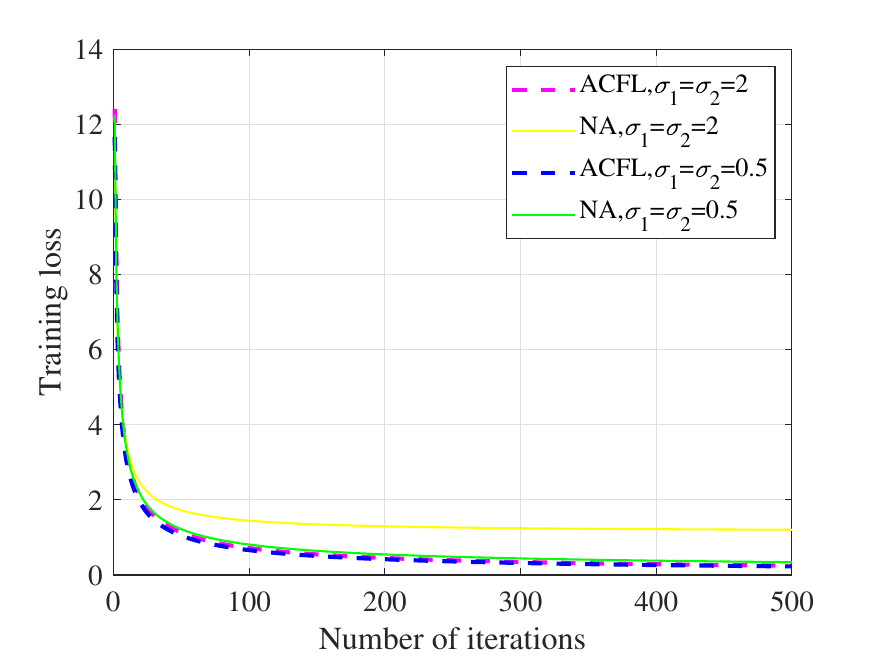}
        \label{fig: compare_ACFL_NA_p02}
    }

    \subfloat[]{%
        \includegraphics[width=0.8\linewidth]{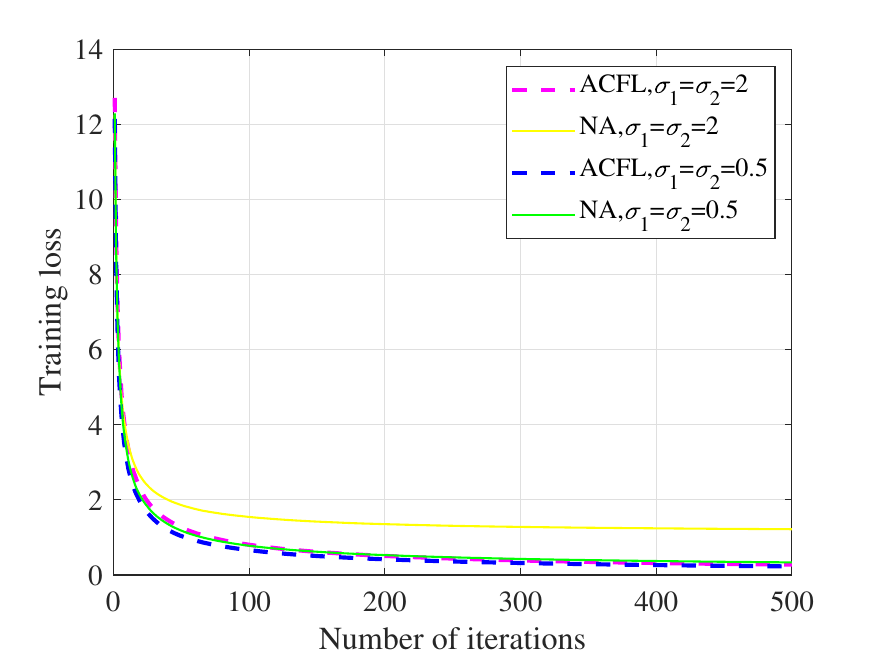}
        \label{fig: compare_ACFL_NA_p04}
    }
    \caption{Training loss as a function of number of iterations of ACFL and NA with varying noise variances. (a) The straggler probability $p=0.2$. (b) The straggler probability $p=0.4$. }
    \label{fig: compare_ACFL_NA}
\end{figure}

To further validate the efficacy of the proposed method, we compare its performance with the state-of-the-art method SCFL. Note that the difference between SCFL and ACFL lies in the way of encoding local datasets and the way of aggregating gradients at the central server, which has been detailed in Section~\ref{introduction}.  In Fig. \ref{fig: scfl}, we plot the training loss as a function of number of iterations for both methods under various privacy levels and varying degrees of non-i.i.d.-ness in the training data, i.e., under varying values of $\epsilon$ using the metric of $\epsilon$-MI-DP and varying values of $\sigma^2_s$.  The variances of the noise in both methods are calculated according to Theorem \ref{Privacy performance of ACFL} from this paper and Theorem 2 in \cite{sun2023stochastic}, ensuring required value of $\epsilon$ for both methods. Here, we fix $p=0.2, N=100$ and set the learning rate as $\eta_t=0.0001/t$ throughout the iterations. For an equitable comparison, we conduct full-batch gradients for both methods\footnote{It is important to note that the analyses and simulations presented in this paper can be readily extended to scenarios involving stochastic gradient computations by the devices and the central server.}. Furthermore, the local coded datasets in both methods are of the same size.  From Fig.~\ref{fig: scfl}, it is evident that ACFL achieves superior learning performance under the same privacy level, both in the i.i.d. setting (with $\sigma_s^2 = 0$) and in the non-i.i.d. setting (with $\sigma_s^2 = 0.001$), where the non-i.i.d. nature of the data is controlled by the value of $\sigma_s^2$.  Moreover, the learning performance of ACFL exhibits greater resilience and suffers less from increasing the noise variances to heighten the privacy level. This advantage stems from two aspects. First, in ACFL, before the training, the transformed local datasets retain information which is totally equivalent to the raw datasets for gradient computation, contributing to better learning performance at the same privacy level. Second, during the training, SCFL consistently utilizes fixed aggregation weights across iterations at the server, while ACFL uses adaptive aggregation weights to aggregate information at the server by fully considering the generation process of the global coded dataset and the dynamic nature of the trained model over iterations. In this way, ACFL guarantees the optimal performance in terms of both privacy and learning.

\begin{figure}[h]
    \centering
    \subfloat[]{%
        \includegraphics[width=0.8\linewidth]{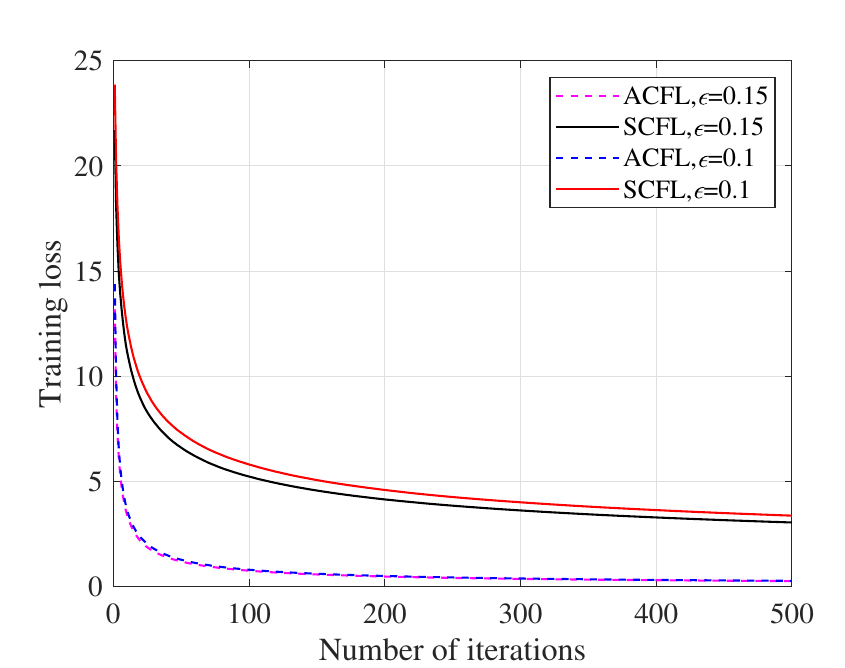}
    }

    \subfloat[]{%
        \includegraphics[width=0.8\linewidth]{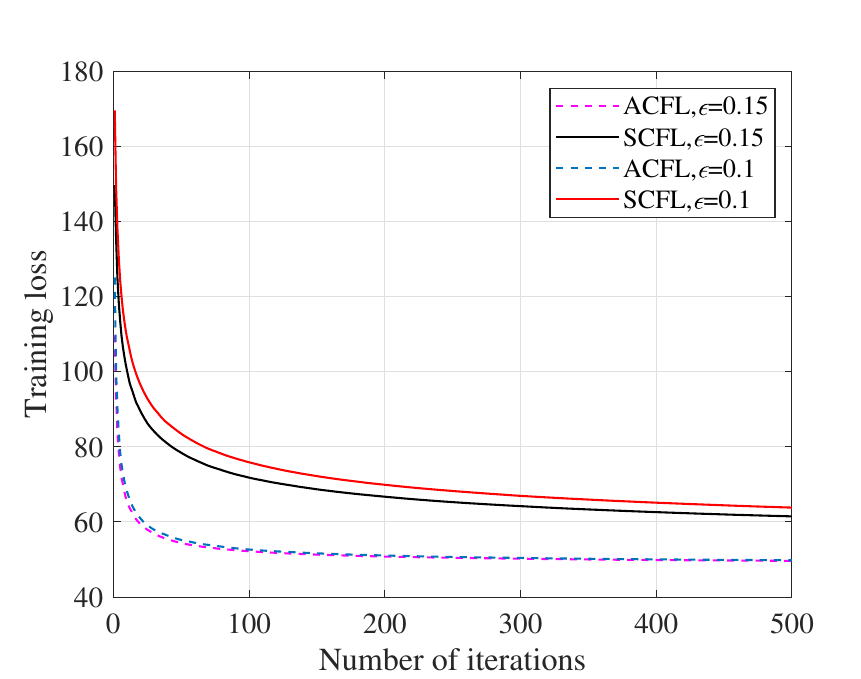}
    }
    \caption{{Training loss as a function of number of iterations of ACFL and SCFL under varying privacy levels and varying degrees of non-i.i.d.-ness in the training data. (a) $\sigma_s^2=0$. (b) $\sigma_s^2=0.001$.} }
    \label{fig: scfl}
\end{figure}
\begin{figure}[h]
    \centering
    \subfloat[]{%
        \includegraphics[width=0.8\linewidth]{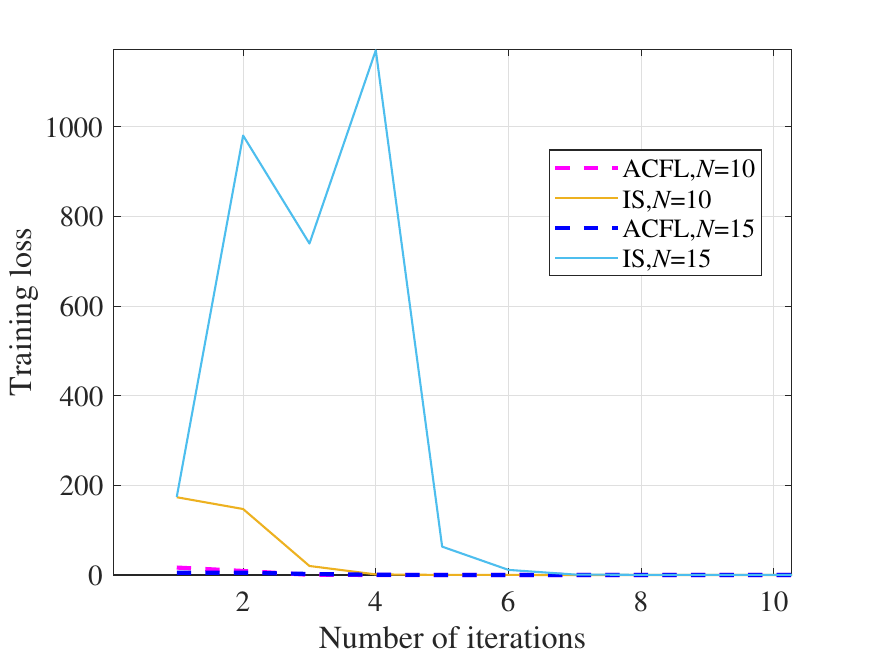}
        \label{fig: compare is our linear}
    }

    \subfloat[]{%
        \includegraphics[width=0.8\linewidth]{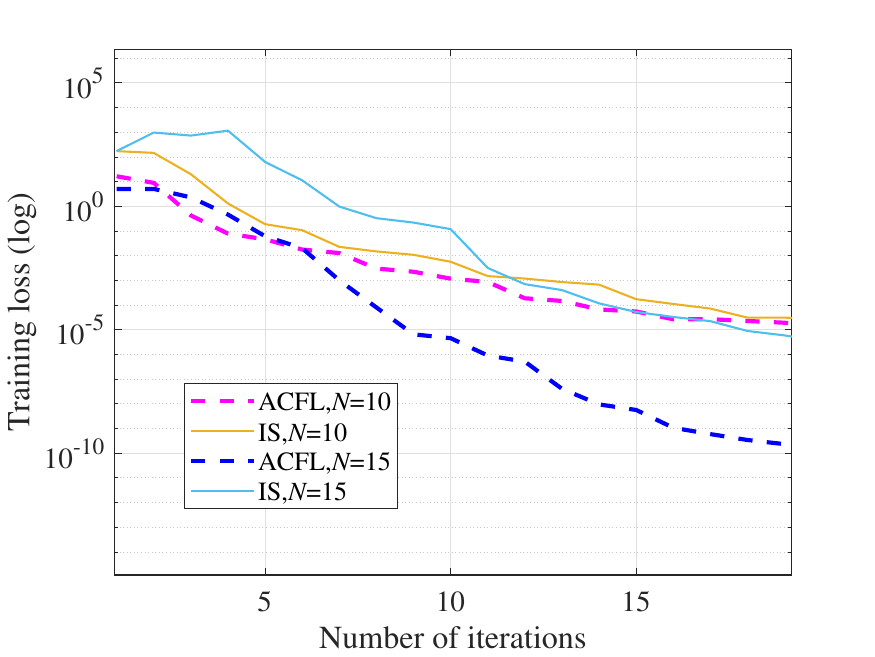}
        \label{fig: compare is our log}
    }
    \caption{Training loss as a function of the number of iterations for ACFL and IS with varying numbers of devices. (a) The training loss is shown on a linear scale. (b) The training loss is shown on a logarithmic scale.
 }
    \label{fig: compare our is}
\end{figure}

Finally, we compare the proposed method with the FL method that simply ignores the existence of stragglers and uses only the information sent by the non-stragglers during the training iterations (denoted as IS). IS is equivalent to fixing the parameter as $\alpha_t=0$ in the proposed method. In Fig.~\ref{fig: compare our is}, we show the training loss as a function of the number of iterations for both the proposed method and IS, with varying numbers of devices, where we fix $\sigma_s^2=0$, $p=0.8$, $\sigma_1=\sigma_2=0.2$, and set the learning rate as $\eta_t=0.01/t$. We can observe in Fig.~\ref{fig: compare our is} that, during the early stages of training, IS experiences very unstable performance due to missing information from the stragglers. In contrast, the proposed method converges stably throughout the training process, benefiting from the fact that the missing information from the stragglers can be compensated by the information gained from the global coded dataset. It is worth noting that the performance gain of ACFL comes at the cost of some privacy leakage compared to IS. In practice, a proper trade-off between learning performance and privacy preservation should be considered when applying ACFL. It is worth emphasizing that the value of our work lies in providing more flexibility to adjust the trade-off between privacy level and straggler mitigation, making the designed FL system more adaptable to different scenarios.

\subsection{{Nonlinear regression scenarios}} 
 In practice, many nonlinear problems can be transformed into linear regression problems, indicating that the proposed method has broader applications beyond linear cases. To verify the value of our proposed method in nonlinear problems, we consider an FL setting with nonlinear training loss, where each device holds a local dataset with $m_i = 100$ samples, $\forall i$. The local dataset at device $i$ is represented by ${{\mathbf{x}}^{(i)}} \in \mathbb{R}^{m_i \times 1}$, where $x^i_k \in \mathbb{R}$ denotes the $k$-th data sample in ${{\mathbf{x}}^{(i)}}$, $k=1,...,m_i$. The elements in ${{\mathbf{x}}^{(i)}}$ are i.i.d., drawn from the uniform distribution over the interval $[-1, 1]$. The label ${{{y}}^i_k}$ corresponding to $x^i_k$ is generated according to $y_k^i = \exp \left( {\xi x_k^i} \right)-1$, $\forall i,k$, where $\exp \left( {\cdot} \right)$ is the exponential function, and $\xi=0.5$ is a constant. To fit the nonlinear relationship between the data samples and the labels, we use a high-order approximation by expanding each training data sample as \(\left[ {x_k^i,{{\left( {x_k^i} \right)}^2},...,{{\left( {x_k^i} \right)}^{10}}} \right]\) and applying the linear regression model expressed as \({{\mathbf{y}}^{\left( i \right)}} = {{\mathbf{X}}^{\left( i \right)}}{\mathbf{w}}\), where ${{\mathbf{y}}^{(i)}} \in \mathbb{R}^{m_i \times 1}$ contains all the labels of the local dataset for device \(i\). 
Here, ${{\mathbf{X}}^{(i)}} \in \mathbb{R}^{m_i \times 10}$ represents the expanded local dataset of device \(i\), where the \(k\)-th row of ${{\mathbf{X}}^{(i)}}$ is given as \(\left[ {x_k^i,{{\left( {x_k^i} \right)}^2},...,{{\left( {x_k^i} \right)}^{10}}} \right]\), $\forall k$.  The vector \(\mathbf{w} \in \mathbb{R}^{10 \times 1}\) is the model parameter to be optimized, which indicates that the devices collaboratively train a linear regression model defined by the vector $\mathbf{w} \in \mathbb{R}^{d \times o}$, where $d = 10$ represents the input feature dimension and $o = 1$ denotes the output dimension. 
Before the training, the initial parameter vector ${\mathbf{w}}_0$ is generated, where the elements are drawn from a uniform distribution in the interval $[0, 1]$. To test the performance of the trained model, a test set is generated consisting of 10 data samples drawn from the same distribution as the training data. The corresponding labels are generated in the same manner as those in the training set.

\begin{figure}[h]
    \centering
    \subfloat[]{%
        \includegraphics[width=0.8\linewidth]{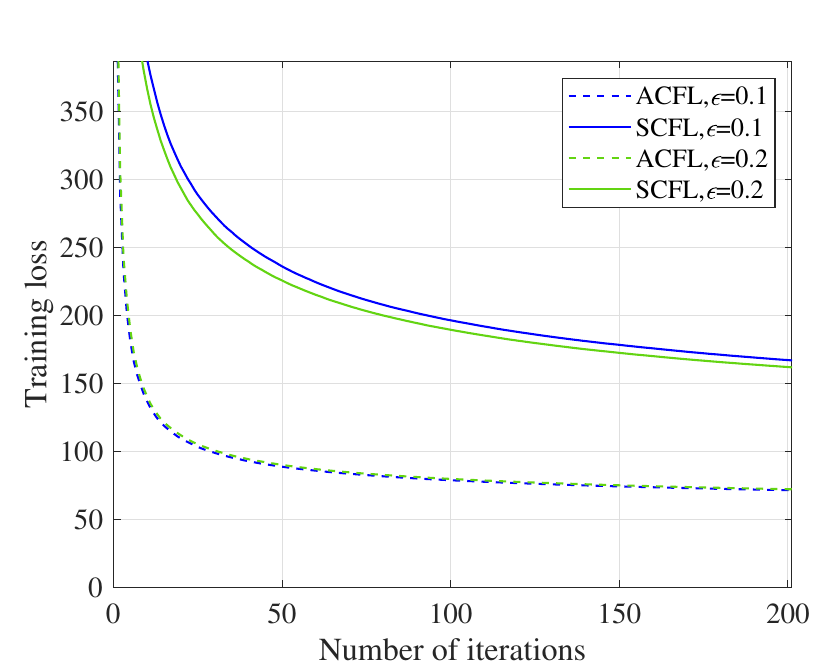}
        \label{fig: non linear train}
    }

    \subfloat[]{%
        \includegraphics[width=0.8\linewidth]{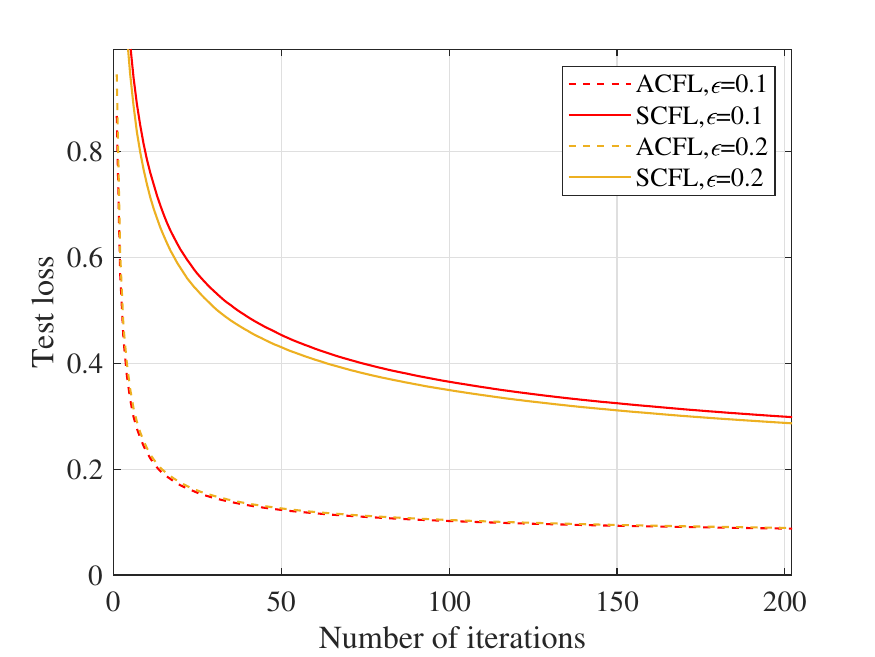}
        \label{fig: non linear test}
    }
    \caption{{Training loss and test loss as functions of the number of iterations for ACFL and SCFL under varying privacy levels, where each device holds a local dataset with $m_i = 100$ samples, $\forall i$. (a) The training loss. (b) The test loss.}
 }
    \label{fig: nonlinear}
\end{figure}

To verify the superiority of the proposed method, we compare its performance with SCFL by plotting the training loss and test loss as functions of the number of iterations under various privacy levels in Fig.~\ref{fig: nonlinear}. Here, we fix $p=0.2$, set $N=100$, and use a learning rate of $\eta_t=0.0001/t$ during training. From Fig.~\ref{fig: non linear train}, we can clearly observe that ACFL achieves significantly better learning performance at the same privacy level, attaining a lower training loss after the same number of iterations and converging faster. Similarly, Fig.~\ref{fig: non linear test} shows that ACFL also outperforms SCFL in terms of test performance. These results demonstrate that the proposed method holds significant value in FL under nonlinear regression scenarios.

   To further demonstrate the superiority of the proposed method when each device possesses more data samples, we compare the performance of ACFL with SCFL by plotting the training and test losses as functions of the number of iterations under various privacy levels in Fig.~\ref{fig: nonlinear1000}. In this setup, each device holds a local dataset with $m_i = 1000$ samples; we set $p=0.2$, $N=100$, and use a learning rate of $\eta_t=0.00001/t$. From Fig.~\ref{fig: nonlinear1000}, it can be observed that ACFL achieves significantly better learning performance under the same privacy level, attaining lower training and test losses after the same number of iterations. These results indicate that ACFL outperforms the baseline across various settings, regardless of whether each device has a smaller or larger amount of training data in FL under nonlinear regression scenarios.

\begin{figure}
    \centering
    \subfloat[]{%
        \includegraphics[width=0.8\linewidth]{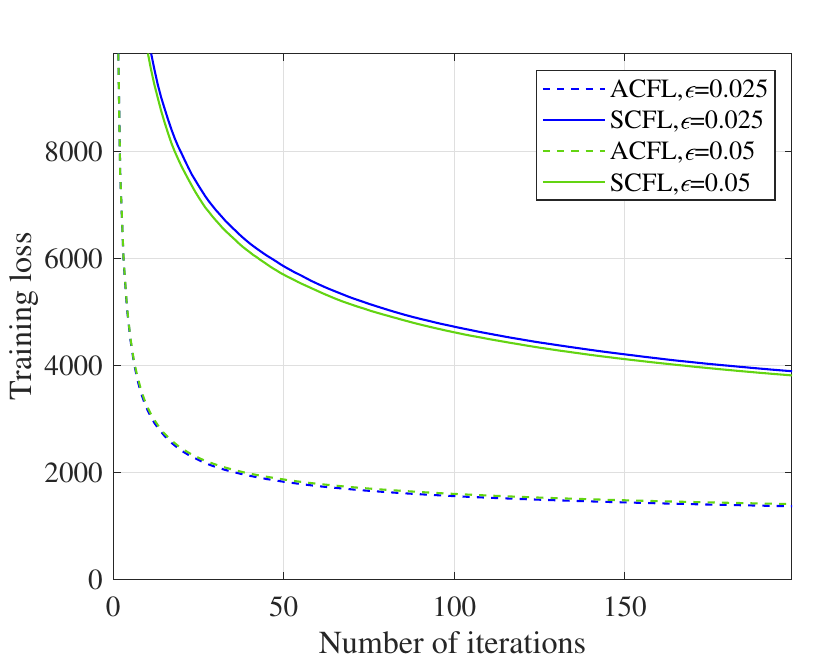}
       
    }

    \subfloat[]{%
        \includegraphics[width=0.8\linewidth]{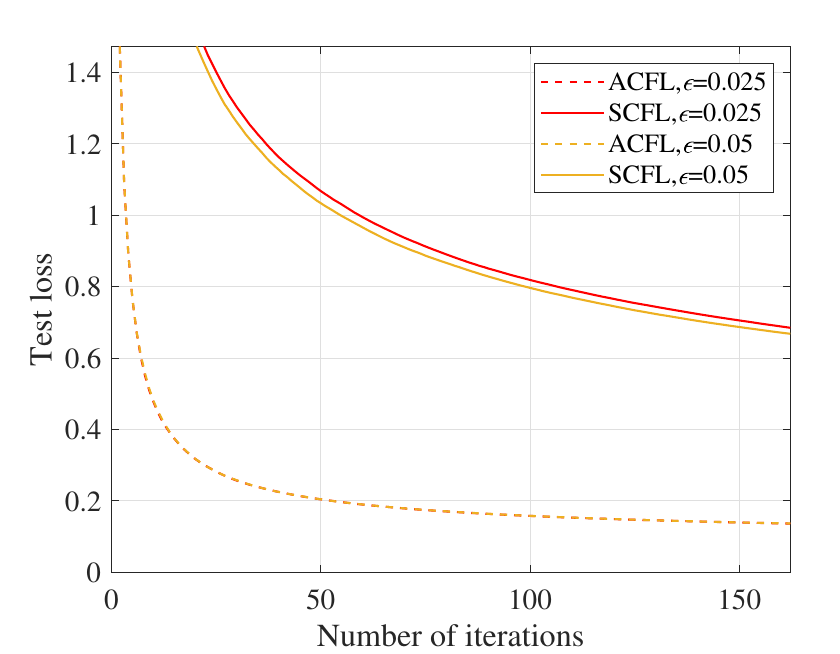}
       
    }
    \caption{{{Training loss and test loss as functions of the number of iterations for ACFL and SCFL under varying privacy levels, where each device holds a local dataset with $m_i = 1000$ samples, $\forall i$. (a) The training loss. (b) The test loss.}}
 }
    \label{fig: nonlinear1000}
\end{figure}

In this simulation, we specifically consider a simple nonlinear regression model based on the power law relationship. This choice is motivated by the fact that power law models have widespread applications, as numerous physical and human-made phenomena approximately follow power law relationships. Examples include modeling in the human motor system \cite{lacquaniti1983law} and the modeling of banking networks \cite{fricke2015distribution}. It is worth emphasizing that the proposed method can also be applied to other nonlinear regression models using various techniques that transform nonlinear regression problems into linear ones. However, due to page limitations, extensive examples and simulations are not provided in this paper. Investigating scalability and practical performance in more complex scenarios will be considered as a promising direction for future research.

\section{Conclusions}
\label{conclusions}
In this article, we addressed the problem of FL in the presence of stragglers. For this issue, the CFL paradigm was explored in the literature. However, the aggregation of gradients at the central server in existing CFL methods, which employ fixed weights across iterations, leads to diminished learning performance. To overcome this limitation, we proposed a new method: ACFL. In ACFL, each device uploads a local coded dataset to the central server before training, enabling the central server to generate a global coded dataset under privacy-preservation requirements. During each iteration of the training, the central server aggregates the received gradients as well as the gradient computed from the global coded dataset by using an adaptive policy to determine the aggregation weights. This approach optimizes the performance in terms of privacy and learning under theoretical guarantees. Finally, we provided simulation results to demonstrate the superiority of ACFL compared to a number of state-of-the-art baseline methods. 

 It is worth noting that in this work, we focused on the privacy preservation of raw datasets of the devices. In the future, we plan to extend the proposed method to more complex scenarios, where private information could also be revealed through the analysis of gradients transmitted from the devices and the devices do not share raw local gradients with the server \cite{wei2020federated, wei2023personalized, lyu2024secure}. Additionally, we plan to propose an enhanced version of the ACFL scheme by allowing devices to perform varying amounts of work in each iteration based on their available resources or using partial computations of the stragglers, rather than directly dropping the stragglers \cite{li2020federated}. We would also like to point out that, in this work, we assumed that the server is trustworthy and operates honestly, as has been similarly assumed in most of the existing FL literature. In the future, we plan to extend this work to more complex environments where the server may be malicious based on verifiable FL techniques \cite{zhang2022towards}.
In future work, we could also explore how multi-modal learning can preserve privacy under the CFL framework by leveraging modality-specific noise injection \cite{du2024distributed}.

\appendices
\section{Proof of Theorem~\ref{Privacy performance of ACFL}}
  \label{appendix privacy}
In the following proof, for the sake of brevity of denotation, we will omit the index $(i)$. Noting that 
\begin{align}
    \label{jk11}
    &\mathop {\sup }\limits_{j,k,\Gamma\left( {{\mathbf{X}},{\mathbf{Y}}} \right)} I\left( {\left. {X_{j,k}^{};{\mathbf{H}}_X^{},{\mathbf{H}}_Y^{}} \right|{{\mathbf{X}}^{ - j,k}},{\mathbf{Y}}} \right)\nonumber\\
    =& \mathop {\sup }\limits_{\Gamma\left( {{\mathbf{X}},{\mathbf{Y}}} \right)} I\left( {\left. {X_{1,1}^{};{\mathbf{H}}_X^{},{\mathbf{H}}_Y^{}} \right|{{\mathbf{X}}^{ - 1,1}},{\mathbf{Y}}} \right),
\end{align}
it is sufficient to show
\begin{align}
    \label{prove_equ}
    \mathop {\sup }\limits_{\Gamma\left( {{\mathbf{X}},{\mathbf{Y}}} \right)} I\left( {\left. {X_{1,1}^{};{\mathbf{H}}_X^{},{\mathbf{H}}_Y^{}} \right|{{\mathbf{X}}^{ - 1,1}},{\mathbf{Y}}} \right) \leq \epsilon,
\end{align}
in order to prove Theorem \ref{Privacy performance of ACFL} without loss of generality. To this end, we can derive 
    \begin{align}
        \label{midp1}
  &\mathop {\sup }\limits_{\Gamma\left( {{\mathbf{X}},{\mathbf{Y}}} \right)} I\left( {\left. {X_{1,1}^{};{\mathbf{H}}_X^{},{\mathbf{H}}_Y^{}} \right|{{\mathbf{X}}^{ - 1,1}},{\mathbf{Y}}} \right) \nonumber \\
   = &\mathop {\sup }\limits_{\Gamma\left( {{\mathbf{X}},{\mathbf{Y}}} \right)} I\left( {\left. {X_{1,1}^{};{{\mathbf{X}}^T}{\mathbf{X}} + {\mathbf{N}}_1^{},{{\mathbf{X}}^T}{\mathbf{Y}} + {\mathbf{N}}_2^{}} \right|{{\mathbf{X}}^{ - 1,1}},{\mathbf{Y}}} \right) \nonumber \\
   = &\mathop {\sup }\limits_{\Gamma\left( {{\mathbf{X}},{\mathbf{Y}}} \right)} h\left( {\left. {{{\mathbf{X}}^T}{\mathbf{X}} + {\mathbf{N}}_1^{},{{\mathbf{X}}^T}{\mathbf{Y}} + {\mathbf{N}}_2^{}} \right|{{\mathbf{X}}^{ - 1,1}},{\mathbf{Y}}} \right) \nonumber \\
   &- h\left( {\left. {{{\mathbf{X}}^T}{\mathbf{X}} + {\mathbf{N}}_1^{},{{\mathbf{X}}^T}{\mathbf{Y}} + {\mathbf{N}}_2^{}} \right|{\mathbf{X}},{\mathbf{Y}}} \right) \nonumber \\
   = &\mathop {\sup }\limits_{\Gamma\left( {{\mathbf{X}},{\mathbf{Y}}} \right)} h\left( {\left. {{{\mathbf{X}}^T}{\mathbf{X}} + {\mathbf{N}}_1^{},{{\mathbf{X}}^T}{\mathbf{Y}} + {\mathbf{N}}_2^{}} \right|{{\mathbf{X}}^{ - 1,1}},{\mathbf{Y}}} \right) \nonumber \\
  & - h\left( {{\mathbf{N}}_1^{},{\mathbf{N}}_2^{}} \right),
  \end{align}
where \(h\left( {\cdot} \right)\) denotes the entropy, the first equality is derived by substituting (\ref{local coded dataset}) and (\ref{coded label}) into (\ref{midp1}) and the second equality holds according to the definition of mutual information. In (\ref{midp1}), for the first term, we have (\ref{first term mi}) based on the independence between the additive noise and the raw data. 
  \begin{figure*} 
\begin{align}
    \label{first term mi}
  &\mathop {\sup }\limits_{\Gamma\left( {{\mathbf{X}},{\mathbf{Y}}} \right)} h\left( {\left. {{{\mathbf{X}}^T}{\mathbf{X}} + {\mathbf{N}}_1^{},{{\mathbf{X}}^T}{\mathbf{Y}} + {\mathbf{N}}_2^{}} \right|{{\mathbf{X}}^{ - 1,1}},{\mathbf{Y}}} \right) \nonumber \\
   =& \mathop {\sup }\limits_{\Gamma\left( {{\mathbf{X}},{\mathbf{Y}}} \right)} h\left( {\left. {\left\{ {{X_{1,j}}{X_{1,1}} + {{\left( {{N_1}} \right)}_{j,1}},j \ne 1} \right\},\left\{ {{X_{1,k}}{X_{1,1}} + {{\left( {{N_1}} \right)}_{1,k}},k \ne 1} \right\},X_{1,1}^2 + {{\left( {{N_1}} \right)}_{1,1}},\left\{ {{X_{1,1}}{Y_{1,k}} + {{\left( {{N_2}} \right)}_{1,k}},\forall k} \right\}} \right|{{\mathbf{X}}^{ - 1,1}},{\mathbf{Y}}} \right) \nonumber \\
   &+ h\left( {\left\{ {{{\left( {{{N}}_1^{}} \right)}_{j,k}},j \ne 1,k \ne 1} \right\}} \right) + h\left( {\left\{ {{{\left( {{{N}}_2^{}} \right)}_{j,k}},j \ne 1,\forall k} \right\}} \right) \nonumber \\
   \leq& \mathop {\sup }\limits_{\Gamma\left( {\mathbf{X}} \right)} \sum\limits_{j \ne 1} {h\left( {\left. {{X_{1,j}}{X_{1,1}} + {{\left( {{N_1}} \right)}_{j,1}}} \right|{{\mathbf{X}}^{ - 1,1}}} \right)}  + \mathop {\sup }\limits_{\Gamma\left( {\mathbf{X}} \right)} \sum\limits_{k \ne 1} {h\left( {\left. {{X_{1,k}}{X_{1,1}} + {{\left( {{N_1}} \right)}_{1,k}}} \right|{{\mathbf{X}}^{ - 1,1}}} \right)} + \mathop {\sup }\limits_{\Gamma\left( {{X_{1,1}}} \right)} h\left( {X_{1,1}^2 + {{\left( {{N_1}} \right)}_{1,1}}} \right)  \nonumber \\
   &+ \mathop {\sup }\limits_{\Gamma\left( {\mathbf{Y}} \right)} \sum\limits_k {h\left( {\left. {{X_{1,1}}{Y_{1,k}} + {{\left( {{N_2}} \right)}_{1,k}}} \right|{\mathbf{Y}}} \right)} + h\left( {\left\{ {{{\left( {{{N}}_1^{}} \right)}_{j,k}},j \ne 1,k \ne 1} \right\}} \right) + h\left( {\left\{ {{{\left( {{{N}}_2^{}} \right)}_{j,k}},j \ne 1,\forall k} \right\}} \right) \nonumber \\
   \leq& \mathop {\sup }\limits_{\Gamma\left( {\mathbf{X}} \right)} \sum\limits_{j \ne 1} {h\left( {{X_{1,j}}{X_{1,1}} + {{\left( {{N_1}} \right)}_{j,1}}} \right)}  + \mathop {\sup }\limits_{\Gamma\left( {\mathbf{X}} \right)} \sum\limits_{k \ne 1} {h\left( {{X_{1,k}}{X_{1,1}} + {{\left( {{N_1}} \right)}_{1,k}}} \right)}+ \mathop {\sup }\limits_{\Gamma\left( {{X_{1,1}}} \right)} h\left( {X_{1,1}^2 + {{\left( {{N_1}} \right)}_{1,1}}} \right)  \nonumber \\
   & + \mathop {\sup }\limits_{\Gamma\left( {\mathbf{Y}} \right)} \sum\limits_k {h\left( {{X_{1,1}}{Y_{1,k}} + {{\left( {{N_2}} \right)}_{1,k}}} \right)} +h\left( {\left\{ {{{\left( {{{N}}_1^{}} \right)}_{j,k}},j \ne 1,k \ne 1} \right\}} \right) + h\left( {\left\{ {{{\left( {{{N}}_2^{}} \right)}_{j,k}},j \ne 1,\forall k} \right\}} \right).
\end{align}
  \end{figure*} 
Substituting (\ref{first term mi}) into (\ref{midp1}) yields 
\begin{align}
    \label{midp2}
 & \mathop {\sup }\limits_{\Gamma\left( {{\mathbf{X}},{\mathbf{Y}}} \right)} I\left( {\left. {X_{1,1}^{};{\mathbf{H}}_X^{},{\mathbf{H}}_Y^{}} \right|{{\mathbf{X}}^{ - 1,1}},{\mathbf{Y}}} \right) \nonumber \\
   \leq& \mathop {\sup }\limits_{\Gamma\left( {\mathbf{X}} \right)} \sum\limits_{j \ne 1} {h\left( {{X_{1,j}}{X_{1,1}} + {{\left( {{N_1}} \right)}_{j,1}}} \right)}  \nonumber \\
  & + \mathop {\sup }\limits_{\Gamma\left( {\mathbf{X}} \right)} \sum\limits_{k \ne 1} {h\left( {{X_{1,k}}{X_{1,1}} + {{\left( {{N_1}} \right)}_{1,k}}} \right)}  \nonumber \\
  & + \mathop {\sup }\limits_{\Gamma\left( {{X_{1,1}}} \right)} h\left( {X_{1,1}^2 + {{\left( {{N_1}} \right)}_{1,1}}} \right) \nonumber \\
  & + \mathop {\sup }\limits_{\Gamma\left( {\mathbf{Y}} \right)} \sum\limits_k {h\left( {{X_{1,1}}{Y_{1,k}} + {{\left( {{N_2}} \right)}_{1,k}}} \right)}  \nonumber \\
  & - h\left( {\left\{ {{{\left( {{{N}}_1^{}} \right)}_{j,k}},j = 1,k \ne 1} \right\}} \right) \nonumber\\
  &- h\left( {\left\{ {{{\left( {{{N}}_1^{}} \right)}_{j,k}},j \ne 1,k = 1} \right\}} \right) \nonumber \\
 &  - h\left[ {{{\left( {{{N}}_1^{}} \right)}_{1,1}}} \right] - h\left( {\left\{ {{{\left( {{{N}}_2^{}} \right)}_{j,k}},j = 1,\forall k} \right\}} \right) \nonumber \\
   \leq& \left( {d - \frac{1}{2}} \right)\log \left[ {2\pi e\left( {1 + \sigma _1^2} \right)} \right] + \frac{o}{2}\log \left[ {2\pi e\left( {1 + \sigma _2^2} \right)} \right] \nonumber \\
  & - \left( {d - \frac{1}{2}} \right)\log \left( {2\pi e\sigma _1^2} \right) - \frac{o}{2}\log \left( {2\pi e\sigma _2^2} \right) \nonumber \\
   = &\left( {d - \frac{1}{2}} \right)\log \frac{{1 + \sigma _1^2}}{{\sigma _1^2}} + \frac{o}{2}\log \frac{{1 + \sigma _2^2}}{{\sigma _2^2}},
\end{align}
based on Assumption \ref{assumption absolute values in x and y} and the maximum entropy bound \cite{cover1999elements}. From (\ref{midp2}), we can easily derive Theorem \ref{Privacy performance of ACFL}. 

\section{Proof of Lemma~\ref{bounded variance}}
\label{appendix bounded variance}
    We can derive that
    \begin{align}
        \label{f2norm}
  &\mathbb{E}\left[ {\left. {\left\| {{\mathbf{G}}_t^{All}} \right\|_F^2} \right|{{\mathbf{W}}_t}} \right]\nonumber\\
  = &\mathbb{E}\left[ {\left. {\left\| {{\alpha _t}{\mathbf{G}}_t^S + \frac{{1 - {\alpha _t}}}{{1 - p}}\sum\limits_{i = 1}^N {{\mathbf{G}}_t^{\left( i \right)}I_t^{\left( i \right)}} } \right\|_F^2} \right|{{\mathbf{W}}_t}} \right] \nonumber \\
   =& \mathbb{E}\left[ {\left\| {\sum\limits_{i = 1}^N {{\mathbf{G}}_t^{\left( i \right)}\left[ {{\alpha _t} + \frac{{1 - {\alpha _t}}}{{1 - p}}I_t^{\left( i \right)}} \right]} } \right.} \right. \nonumber \\
   &+ \left. {\left. {\left. {{\alpha _t}\sum\limits_{i = 1}^N {{\mathbf{N}}_1^{\left( i \right)}} {{\mathbf{W}}_t} - {\alpha _t}\sum\limits_{i = 1}^N {{\mathbf{N}}_2^{\left( i \right)}} } \right\|_F^2} \right|{{\mathbf{W}}_t}} \right] \nonumber \\
   = &\mathbb{E}\left[ {\left. {\left\| {\sum\limits_{i = 1}^N {{\mathbf{G}}_t^{\left( i \right)}\left[ {{\alpha _t} + \frac{{1 - {\alpha _t}}}{{1 - p}}I_t^{\left( i \right)}} \right]} } \right\|_F^2} \right|{{\mathbf{W}}_t}} \right] \nonumber\\
   &+ \mathbb{E}\left[ {\left. {\left\| {{\alpha _t}\sum\limits_{i = 1}^N {{\mathbf{N}}_1^{\left( i \right)}} {{\mathbf{W}}_t}} \right\|_F^2} \right|{{\mathbf{W}}_t}} \right]\nonumber\\
   & + \mathbb{E}\left( {\left\| {{\alpha _t}\sum\limits_{i = 1}^N {{\mathbf{N}}_2^{\left( i \right)}} } \right\|_F^2} \right),
    \end{align}
based on (\ref{local coded dataset}), (\ref{coded label}), (\ref{global coded dataset}), (\ref{server update}), (\ref{global model update}) and the fact that the additive noise is independent from the straggler behaviour of the devices. For the first term in (\ref{f2norm}), we have
\begin{align}
    \label{term1}
  &\mathbb{E}\left[ {\left. {\left\| {\sum\limits_{i = 1}^N {{\mathbf{G}}_t^{\left( i \right)}\left[ {{\alpha _t} + \frac{{1 - {\alpha _t}}}{{1 - p}}I_t^{\left( i \right)}} \right]} } \right\|_F^2} \right|{{\mathbf{W}}_t}} \right] \nonumber \\
   =& \sum\limits_{{i_1} = 1}^N {\sum\limits_{{i_2} = 1}^N {\left\langle {{\mathbf{G}}_t^{\left( {{i_1}} \right)},{\mathbf{G}}_t^{\left( {{i_2}} \right)}} \right\rangle } }  \nonumber \\
  & \times \mathbb{E}\left[ {\left( {{\alpha _t} + \frac{{1 - {\alpha _t}}}{{1 - p}}I_t^{\left( {{i_1}} \right)}} \right)\left( {{\alpha _t} + \frac{{1 - {\alpha _t}}}{{1 - p}}I_t^{\left( {{i_2}} \right)}} \right)} \right]   \nonumber \\
   = &\sum\limits_{{i_1} = 1}^N {\sum\limits_{{i_2} = 1}^N {\left\langle {{\mathbf{G}}_t^{\left( {{i_1}} \right)},{\mathbf{G}}_t^{\left( {{i_2}} \right)}} \right\rangle } }  - \sum\limits_{i = 1}^N {\left\langle {{\mathbf{G}}_t^{\left( i \right)},{\mathbf{G}}_t^{\left( i \right)}} \right\rangle }  \nonumber \\
   &+ \sum\limits_{i = 1}^N {\left\langle {{\mathbf{G}}_t^{\left( i \right)},{\mathbf{G}}_t^{\left( i \right)}} \right\rangle \left[ {\alpha _t^2p + \left( {1 - p} \right){{\left( {{\alpha _t} + \frac{{1 - {\alpha _t}}}{{1 - p}}} \right)}^2}} \right]}  \nonumber \\
   \leq& \left[ {\alpha _t^2p + \left( {1 - p} \right){{\left( {{\alpha _t} + \frac{{1 - {\alpha _t}}}{{1 - p}}} \right)}^2} + N - 1} \right]\sum\limits_{i = 1}^N {\left\| {{\mathbf{G}}_t^{\left( i \right)}} \right\|_F^2}  \nonumber \\
   \leq& \left[ {\alpha _t^2p + \left( {1 - p} \right){{\left( {{\alpha _t} + \frac{{1 - {\alpha _t}}}{{1 - p}}} \right)}^2} + N - 1} \right]N{\beta ^2},
\end{align}
where \(\left\langle {{\mathbf{A}},{\mathbf{B}}} \right\rangle \) denotes the Hadamard Product of two matrices, the first inequality holds due to the following basic inequality:
\begin{align}
    \label{basic1}
    \left\| {\sum\limits_{i = 1}^n {{{\mathbf{A}}_i}} } \right\|_F^2 \leq n\sum\limits_{i = 1}^n {\left\| {{{\mathbf{A}}_i}} \right\|_F^2} ,{\text{ }}\forall {{\mathbf{A}}_i} \in {\mathbb{R}^{a \times b}},
\end{align}
and the second inequality holds due to Assumption \ref{assumption bounded gradients}. For the second term in (\ref{f2norm}), we have
\begin{align}
    \label{term2}
    \mathbb{E}\left[ {\left. {\left\| {{\alpha _t}\sum\limits_{i = 1}^N {{\mathbf{N}}_1^{\left( i \right)}} {{\mathbf{W}}_t}} \right\|_F^2} \right|{{\mathbf{W}}_t}} \right] &= \alpha _t^2\sum\limits_{i = 1}^N {\mathbb{E}\left[ {\left. {\left\| {{\mathbf{N}}_1^{\left( i \right)}{{\mathbf{W}}_t}} \right\|_F^2} \right|{{\mathbf{W}}_t}} \right]}  \nonumber \\
   &= \alpha _t^2Nd\sigma _1^2\left\| {{{\mathbf{W}}_t}} \right\|_F^2 \nonumber \\
   &\leq \alpha _t^2Nd\sigma _1^2{C^2},
\end{align}
based on the independence of the additive noise on different local datasets and Assumption \ref{assumption bounded parameter}. For the third term in (\ref{f2norm}), we have
\begin{align}
    \label{term3}
    \mathbb{E}\left( {\left\| {{\alpha _t}\sum\limits_{i = 1}^N {{\mathbf{N}}_2^{\left( i \right)}} } \right\|_F^2} \right) = \alpha _t^2N\sigma _2^2od.
\end{align}
Substituting (\ref{term1}), (\ref{term2}) and (\ref{term3}) into (\ref{f2norm}) yields Lemma \ref{bounded variance}. 

\bibliographystyle{IEEEtran}   
\bibliography{reference}      
\end{document}